\newtheorem{theorem}{Theorem}
\newtheorem{lemma}[theorem]{Lemma}
\newtheorem{corollary}[theorem]{Corollary}
\newtheorem{proposition}[theorem]{Proposition}
\newtheorem{definition}[theorem]{Definition}
\newtheorem{example}[theorem]{Example}
\newtheorem{remark}[theorem]{Remark}
\begin{document}

\title{Scenario-Based Robust Optimization of Tree Structures\thanks{Partially supported by the grant ANR-23-CE48-0010 PREDICTIONS from the French National Research Agency (ANR).}}

\author{%
Spyros Angelopoulos\thanks{Sorbonne Université, CNRS, LIP6, Paris, France}
\and
Christoph Dürr\footnotemark[2]
\and
Alex Elenter\footnotemark[2]
\and
Georgii Melidi\footnotemark[2]
}

\date{}
\maketitle

\begin{abstract}
We initiate the study of tree structures in the context of {\em scenario}-based robust optimization. Specifically, we study Binary Search Trees (BSTs) and Huffman coding, two fundamental techniques for efficiently managing and encoding data based on a known set of frequencies of keys. Given $k$ different scenarios, each defined by a distinct frequency distribution over the keys, our objective is to compute a single tree of best-possible performance, relative to any scenario. 

We consider, as performance metrics, the {\em competitive ratio}, which compares multiplicatively the cost of the solution to the tree of least cost among all scenarios, as well as the {\em regret}, which induces a similar, but additive comparison. For BSTs, we show that the problem is NP-hard across both  metrics. We also show how to obtain a tree of competitive ratio  $\lceil \log_2(k+1) \rceil$, and we prove that this ratio is optimal. For Huffman Trees, we show that the problem is, likewise, NP-hard across both metrics; we also give an algorithm of regret $\lceil \log_2 k \rceil$, which we show is near-optimal, by proving a lower bound of $\lfloor \log_2 k \rfloor$. Last, we give a polynomial-time algorithm for computing {\em Pareto-optimal} BSTs with respect to their regret, assuming scenarios defined by uniform distributions over the keys. This setting captures, in particular, the first study of {\em fairness} in the context of data structures. We provide an experimental evaluation of all algorithms. To this end, we also provide  mixed integer linear program formulation for computing optimal trees. 
\end{abstract}

\paragraph*{keywords}	
	Binary Search Tree, Huffman Tree, Pareto optimality, robust optimization, scenarios, fairness. 

\section{Introduction}
\label{sec:introduction}

Suppose that we would like to encode and transmit a text in a given language efficiently, i.e., using the least number of bits on expectation. If the alphabet's frequency is known ahead of time, i.e., if the language is pre-determined, this can be done efficiently using the well-known technique of {\em Huffman coding}~\cite{huffman1952method}. But what if we do not know in advance the intended language, but instead it is only known that it can be either English, Italian, or Finnish? In this case, one would like to design a {\em single} code that performs efficiently across all three such  scenarios, and in particular against a worst-case, adversarially chosen language. 
  
For a second example, suppose that we are given $n$ keys to be stored in a {\em Binary Search Tree} (BST), and that we would like to minimize the expected number of comparisons, when performing a search operation for a key. Once again, if the key frequencies are known ahead of time, finding an {\em optimal} BST is a fundamental problem in Computer Science, going back to the seminal work of Knuth~\cite{knuth1971optimum} who gave a quadratic-time algorithm. Suppose, however, that instead of a single frequency vector, we are provided with $k$ different vectors, each defining one possible scenario. How would we construct a {\em single} BST that performs well across all scenarios, hence also against one chosen adversarially? 

Motivated by the above situations, in this work we introduce the study of {\em robust} structures, in the presence of $k$ possible {\em scenarios}. Each scenario is described by means of a frequency vector over the keys, and we seek a single tree that is robust with respect to {\em any} scenario (and thus with respect to the worst-case, adversarial scenario). This approach falls into what is known as {\em scenario-based robust optimization}; see~\cite{ben2009robust} for a survey of this area. While the scenario-based framework  has been studied in the context of AI-related optimization problems in areas such as  planning~\cite{mccarthy2017staying}, scheduling~\cite{shabtay2023state} and network optimization~\cite{kasperski2015complexity,kasperski2015approximability}, to our knowledge it has not been applied to data structures or data coding.

We consider two main measures for evaluating the quality of our solutions. The first measure applies to BSTs, and is the worst-case ratio, among all scenarios $s$,  between the cost of our solution (tree) under scenario $s$ and the cost of the optimal tree under $s$; using the canonical term from online computation, we refer to this measure as the {\em competitive ratio}. For Huffman trees, we consider even stronger guarantees, by studying the worst-case {\em difference}, among all scenarios $s$ between the cost of our solution under $s$ and the cost of the optimal tree for $s$, namely the {\em regret} of the solution. We refer to Sections~\ref{subsec:bst.background} and~\ref{subsec:ht.background} for the formal definitions. Competitive analysis and regret minimization are both well-studied approaches in optimization under uncertainty, that establish strict, worst-case guarantees under adversarial situations~\cite{borodin2005online},~\cite{blum2007learning}. Competitive analysis is the predominant analysis technique for tree-based data structures, see the seminal work~\cite{sleator1985self} on Splay Trees. Regret minimization, on the other hand, may provide more stringent guarantees (c.f. Corollary~\ref{cor:ht.comp}), and notions related to regret have been applied, for instance, to the evaluation of query efficiency in databases~\cite{xie2020experimental},~\cite{nanongkai2010regret}.

\subsection{Contribution}
\label{subsec:contributions}

We begin with the study of robust BSTs in Section~\ref{sec:bst}. We first show that minimizing either the competitive ratio or the regret is NP-hard, even if there are only two scenarios, i.e., $k=2$ (Theorem~\ref{thm:bst.nphard}). We also give an algorithm that constructs a BST of competitive ratio at most $\lceil \log_2 (k+1) \rceil$ (Theorem~\ref{th:alg.bst.ratio}), and we show that this bound is optimal, in that there exists a collection of $k$ scenarios under which no BST can perform better (Theorem~\ref{prop.bst.lb}).

In Section~\ref{sec:huff}, we study robust Huffman trees (HTs). We first show that the problem of minimizing the competitive ratio or the regret is NP-hard, again even if $k=2$ (Theorem~\ref{thm:ht.nphard}). We also give an algorithm for constructing a Huffman tree that has regret that is provably at most  $\lceil \log_2 k \rceil$. We show that this is essentially optimal, by proving a near-matching lower bound equal to $\lfloor \log_2 k \rfloor$ (Theorem~\ref{thm.ht.lb}).

In Section~\ref{sec:regret.bst}, we study the problem of minimizing BST regret with respect to $k$ scenarios. This problem can be formulated as a $k$-objective optimization problem, hence we seek trees in the {\em Pareto-frontier} of the $k$-objectives. For concreteness, we focus on scenarios induced by uniform distributions over subsets of keys: here, we give a polynomial-time algorithm for finding Pareto-optimal solutions. This formulation provides the first framework for quantifying {\em fairness} in the context of data structures. More precisely, we can think of each distinct scenario as the profile of a different {\em user}, and the BST as the single {\em resource} that is shared by all $k$ users. We thus seek a solution that distributes the cost as equitably as possible among the $k$ competing users.       

In Section~\ref{sec:experiments}, we provide an experimental evaluation of all our algorithms over real data. To be able to evaluate and compare the algorithms from Section~\ref{sec:bst} and Section~\ref{sec:huff}, we also provide mixed integer linear program formulations for all objectives, which allows us to compute the optimal trees for small instances. 

In terms of techniques, despite the seeming similarity of the settings, we show that robust BSTs and HTs are quite different problems; this is due to the fact that in the former the keys are stored in all nodes, whereas in the latter they are stored in leaves. This is reflected both in the different NP-hardness proofs and in the different algorithmic approaches. Namely, for BSTs, NP-hardness is established using a non-trivial reduction from the \textsc{Partition} problem (Theorem~\ref{thm:bst.nphard}), whereas for HTs we use a non-trivial reduction from the \textsc{Equal-Cardinality Partition} problem (Theorem~\ref{thm:ht.nphard}). From the algorithmic standpoint, for BSTs we follow a recursive approach for constructing the tree, whereas for HTs we use an approach that allows us to ``aggregate'' the optimal HT for each scenario into a single HT. Last, in terms of finding the Pareto-frontier for regret minimization in BSTs, we use a dynamic programming approach that allows for an efficient implementation.    

\subsection{Related Work}
\label{subsec:related}

The BST is a fundamental data structure that has been studied extensively since the 1960s. See~\cite{windley1960trees, booth1960efficiency, hibbard1962some} for some classic references and~\cite{NAGARAJ19971} for a survey. Given a frequency vector of  accesses to $n$ different keys, finding a BST of optimal average access cost was originally solved in~\cite{knuth1971optimum}. Likewise, Huffman coding~\cite{huffman1952method} is a fundamental technique for lossless data compression, which combines optimality of performance and simplicity of implementation. Given a frequency vector over an $n$-sized alphabet, a Huffman tree can be implemented in time $O(n\log n)$ using a priority queue. We refer to~\cite{moffat2019huffman} for a survey. 

A recent, and very active related direction in Machine Learning seeks to augment data structures with some {\em prediction} concerning the future access requests. Examples of learning-augmented data structures that have been studied include skip lists~\cite{fu2024learning,zeynali2024robust}, BSTs~\cite{lin2022learning,cao2022learning}, B-trees~\cite{kraska2018case} and rank/select dictionaries~\cite{boffa2022learned}; see also~\cite{Ferragina2020} for a survey on the applications of ML in data structures. These works leverage a learned prediction about access patterns, and seek structures whose performance degrades smoothly as a function of the prediction error. The settings we study in this work can thus be interpreted, equivalently, as having access to $k$ different predictions, and seeking structures that perform efficiently even if the worst-case prediction materializes. 

It is known that if the access frequencies are chosen uniformly at random, then with high probability the optimal BST is very close to a complete binary search tree, and its expected height is logarithmic in the number of keys $n$. In the other extreme, there exist adversarial frequencies for which the optimal BST has height that is as large as $\Omega(n)$. Several works have studied the regime between these extremes, via small, random perturbations to the frequency vector, e.g.,~\cite{manthey2007smoothed}. Bicriteria optimization problems over BSTs were studied in~\cite{mankowski2020dynamic}, with the two objectives being the maximum and the average weighted depth, respectively.

Scenario-based robust optimization has been extensively applied to the study of {\em scheduling} problems under uncertainty. Examples include the minimization of completion time~\cite{mastrolilli2013single}, flow-shop scheduling~\cite{kasperski2012approximating} and just-in-time scheduling~\cite{gilenson2021multi}. We refer to~\cite{shabtay2023state} for a recent survey of many results related to scenario-based scheduling.

We conclude with a discussion of {\em fairness}, which is becoming an increasingly demanded requirement in algorithm design and analysis. Algorithmic fairness, defined as an equitable treatment of users that compete for a common resource,  has been studied in some classic settings, including optimal stopping problems~\cite{arsenis2022individual},~\cite{buchbinder2009secretary}, and resource allocation problems such as knapsack~\cite{lechowicztime, patel2021group}. However, to our knowledge, no previous work has addressed the issue of fairness in the context of data structures, although the problem is intrinsically well-motivated: e.g., we seek a data structure (such as a BST) that guarantees an equitable search time across the several competing users that may have access to it. The concept of Pareto-dominance as a criterion for fairness has attracted attention in recent works in ML, e.g.,~\cite{martinez2021blind},~\cite{martinez2020minimax}. In our work, we do rely on learning oracles, and we seek Pareto-optimal solutions that capture group fairness based on regret.
\section{Robust Binary Search Trees}
\label{sec:bst}

\subsection{Background and Measures}
\label{subsec:bst.background}

In its standard form, a BST stores $n$ keys from a given ordered set; without loss of generality, we may assume that the set of keys is the set $\{1, \ldots, n\}$. The  keys are stored in the nodes, and satisfy the \emph{ordering} property: the key of any node is larger than all keys in its left sub-tree, and smaller than all keys in its right sub-tree. For a given key, the corresponding node is accessed in a recursive manner, starting from the {\em root}, and descending in the tree guided by key comparisons. 

A BST can be conveniently represented, equivalently, by its {\em level vector}, denoted by $L$, in the sense that every key $i$ has level $L_i$ in the tree. Here we use the convention that the root has level $1$. Formally, we have:

\begin{definition} \label{level_definition}
    A vector $L\in \{1,\ldots,n\}^n$ is a \emph{level vector} of a BST if and only if for every $1\leqslant i<j\leqslant n$ with $L_i=L_j$, there is a key $i<r<j$ such that $L_r < L_i$. 
\end{definition}

The definition formulates the fact that  if there are two keys at the same level of the BST, then there must exist a key between them of lower level (i.e., higher in the tree). This definition allows us to express the {\em average cost} of a BST represented by a level vector $L$, relative to a frequency vector $F$, as the inner product
\begin{equation}
\textrm{cost}(L,F)= \sum_{i=1}^n L_i\cdot F_i.
\label{eq:vector-def}
\end{equation}

Given a frequency vector $F$, a level vector $L$ minimizing $\textrm{cost}(L,F)$ can be computed, using dynamic programming, in time $O(n^2)$~\cite{knuth1971optimum}. 

Next, we define formally the \emph{robust BST} problem. Its input consists of $k$ frequency vectors $F^1,\ldots,F^k$, called {\em scenarios}. There are three possible performance metrics one could apply in order to evaluate the performance, which give rise to three possible minimization problems on the BST level vector $L$ that must be found:

\smallskip 

\noindent
$\bullet$
{\em Worst-case cost}: here, the objective is to minimize $\max_s \textrm{cost}(L,F^s)$, i.e.,  we seek the tree of smallest cost under the worst-case scenario.

\smallskip
\noindent
$\bullet$ \ 
{\em Competitive ratio}: here, we aim to minimize
\[\max_s \frac{\textrm{cost}(L,F^s)}{ \min_{L^*} \textrm{cost}(L^*, F^s)},\]
i.e., we aim to minimize the worst-case ratio between the cost of our tree and the optimal tree for each scenario.

\smallskip
\noindent
$\bullet$ \ 
{\em Regret}: here, the objective is to minimize the quantity 
\[
\max_s \{ \textrm{cost}(L,F^s) - \min_{L^*} \textrm{cost}(L^*, F^s) \},\]
i.e., we want to achieve the smallest {\em difference} between the cost of our tree and the optimal tree for each scenario.

Note that if for each scenario the respective optimal trees have the same cost, then the three metrics are equivalent. However, in general, they may be incomparable, as shown in the following example. 
 
\begin{example}
Let $F^1=(0,1/4,3/4)$ and $F^2=(4/9,2/9,1/3)$ denote two scenarios for three keys $\{a,b,c\}$. The various metrics for each  possible BST are as depicted.
    \begin{quote}
    \begin{tabular}{lccccc}
    &
    \begin{tikzpicture}[scale=0.8]
        \node (a) at (-0.7,0) {a};
        \node (b) at (0,-0.7) {b};
        \node (c) at (0.7,-1.4) {c};
        \draw (a) -- (b) -- (c);
    \end{tikzpicture}
    &
    \begin{tikzpicture}[scale=0.8]
        \node (a) at (-0.7,0) {a};
        \node (b) at (0,-1.4) {b};
        \node (c) at (0.7,-0.7) {c};
        \draw (a) -- (c) -- (b);
    \end{tikzpicture}
    &
    \begin{tikzpicture}[scale=0.8]
        \node (a) at (-0.7,-0.7) {a};
        \node (b) at (0,-1.4) {b};
        \node (c) at (0.7,0) {c};
        \draw (c) -- (a) -- (b);
    \end{tikzpicture}
    &
    \begin{tikzpicture}[scale=0.8]
        \node (a) at (-0.7,-0.7) {a};
        \node (b) at (0,0) {b};
        \node (c) at (0.7,-0.7) {c};
        \node (d) at (0, -1.4) {};
        \draw (a) -- (b) -- (c);
    \end{tikzpicture}
&	
\begin{tikzpicture}[scale=0.8]
    \node (a) at (-0.7,-1.4) {a};
    \node (b) at (0,-0.7) {b};
    \node (c) at (0.7,0) {c};
    \draw (a) -- (b) -- (c);
\end{tikzpicture}
\\
cost for $F^1$& 11/4  & 9/4   & 7/4   & 3/2   & \cellcolor{lightgray}5/4 \\
cost for $F^2$& 17/9  & \cellcolor{lightgray}16/9  & 16/9  & 17/9  & 19/9 \\
worst cost & 11/4  & 9/4   &\cellcolor{yellow} 16/9  & 17/9  & 19/9 \\
competitive ratio & 11/5  & 9/5   & 7/5   & 6/5   & \cellcolor{yellow}19/16 \\
regret & 3/2   & 1     & 1/2   &\cellcolor{yellow} 1/4   & 1/3\\ 
\end{tabular}
\end{quote}

The first two rows of the table show the cost of each of the five possible BSTs on three keys, for the two scenarios $F^1$ and $F^2$; here the optimal costs for each scenario are highlighted in gray. The remaining three columns show the performance of each tree with respect to the three metrics, with the best performance  highlighted in yellow.

\end{example}
 
 Our hardness results, for both BSTs and HTs, apply to all three metrics. However, for the purposes of analysis, we focus on the competitive ratio and the regret. As discussed in Section~\ref{sec:introduction}, the competitive ratio is the canonical performance notion in the analysis of BSTs, and allows us to capture worst-case performance under uncertainty that can be efficiently approximated, both from the point of view of upper bounds (positive results) and lower bounds (impossibility results). In other words, the competitive ratio reflects the {\em price} of not knowing the actual scenario in  advice, similar to the competitive analysis of online algorithms~\cite{borodin2005online}. On the other hand, regret-minimization can provide even stronger guarantees for HTs, but can also help model issues related to fairness in BSTs, as we discuss in detail in Section~\ref{sec:regret.bst}.

\subsection{Results}
\label{subsec:bst.results}

We first show that finding an optimal robust BST is NP-hard, even if $k=2$. We will prove the result for the cost-minimization version; however the proof carries over to the other metrics. 

To prove NP-hardness, we first need to formulate the decision variant of the problem:
Given two frequency vectors, $F^1$ and $F^2$, and a threshold $V$, the objective is to decide whether there exists a BST $L$ of cost at most $V$, in either scenario, i.e., $\textrm{cost}(L,F^1)\leq V$ and 
$\textrm{cost}(L,F^2)\leq V$.

\begin{theorem}
    The robust BST problem is NP-hard, even if $k=2$. This holds for all three metrics, i.e., for minimizing the cost, or the competitive ratio, or the regret.
\label{thm:bst.nphard}   
\end{theorem}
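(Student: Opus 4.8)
The plan is to reduce from \textsc{Partition}: given positive integers $a_1,\dots,a_m$ with sum $S=\sum_i a_i$, decide whether some subset sums to $S/2$. Following the paper's own remark that the three metrics coincide whenever the per-scenario optima are equal, I would engineer the two scenarios to be \emph{symmetric}, so that a single threshold argument for the worst-case cost simultaneously settles the competitive ratio and the regret. The design uses two kinds of keys: $m+1$ \emph{heavy} separator keys $h_0<\dots<h_m$, each with a large frequency $H$ that is identical in both scenarios, and, in each of the $m$ gaps between consecutive separators, a \emph{pair} of consecutive light keys $u_i<v_i$ encoding item $a_i$. The frequencies of the light keys are where the two scenarios differ: I set $F^1_{u_i}=c+a_i,\ F^1_{v_i}=c$ and, symmetrically, $F^2_{u_i}=c,\ F^2_{v_i}=c+a_i$ for a fixed base $c$ (all frequencies may be normalized at the end without affecting the argument).

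The combinatorial heart is the observation that a BST on the two consecutive keys of a pair has exactly two shapes, call them orientation A ($u_i$ above $v_i$) and orientation B ($v_i$ above $u_i$). If such a pair forms an isolated two-node subtree rooted at some level $\ell_i$, then switching from A to B changes the pair's contribution by exactly $F_{u_i}-F_{v_i}$, \emph{independently of $\ell_i$}. Hence, relative to the all-A tree, orientation B costs an extra $a_i$ under $F^1$, while relative to the all-B tree, orientation A costs an extra $a_i$ under $F^2$. Writing $A$ and $B$ for the sets of pairs in each orientation in a given canonical tree $L$, this yields $\textrm{cost}(L,F^1)=\mathrm{OPT}+\sum_{i\in B}a_i$ and $\textrm{cost}(L,F^2)=\mathrm{OPT}+\sum_{i\in A}a_i$. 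The two scenarios are interchanged by flipping every pair orientation, a cost-preserving bijection between trees for $F^1$ and trees for $F^2$, so their optimal costs are equal, $\mathrm{OPT}_1=\mathrm{OPT}_2=\mathrm{OPT}$ (which is moreover poly-time computable, e.g.\ by Knuth's algorithm).

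Setting $W(L)=\max_s\textrm{cost}(L,F^s)$, the two expressions give $W(L)=\mathrm{OPT}+\max(\sum_{i\in A}a_i,\sum_{i\in B}a_i)\ge \mathrm{OPT}+S/2$, with equality precisely when $\sum_{i\in A}a_i=\sum_{i\in B}a_i=S/2$, i.e., when $A$ witnesses a solution of \textsc{Partition}. Thus with threshold $V=\mathrm{OPT}+S/2$ there is a canonical tree of worst-case cost at most $V$ if and only if the \textsc{Partition} instance is a yes-instance. Because $\mathrm{OPT}_1=\mathrm{OPT}_2$, the regret equals $W(L)-\mathrm{OPT}$ and the competitive ratio equals $W(L)/\mathrm{OPT}$; both are strictly increasing functions of $W(L)$, so the same instance (with the correspondingly translated thresholds $S/2$ and $1+\tfrac{S/2}{\mathrm{OPT}}$) proves hardness for all three metrics at once, and with $k=2$.

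The step I expect to be the main obstacle is the \emph{rigidity lemma}: showing that any tree achieving $W(L)\le V$ must be canonical, i.e., the separators sit in a fixed skeleton and each pair is an isolated two-node subtree, so that the clean decomposition of $\textrm{cost}$ above actually applies. This is exactly the point at which I would use that $H$ is chosen much larger than $S$: any tree that lowers a separator or splits/merges a pair across the skeleton changes the (scenario-independent) heavy contribution by at least $H$ while saving only $O(S)$ on the light keys, and is therefore strictly worse than the canonical trees; hence no non-canonical tree can meet the threshold. Making this exchange argument precise (choosing $H$, and verifying that the light keys in each separator gap are forced into their own subtree, since any BST subtree spans a contiguous key range) is the technical crux, after which the forward construction (balance the $a_i$ when \textsc{Partition} is solvable) and the backward extraction of a balanced partition are immediate.
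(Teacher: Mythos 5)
Your proposal is, in essence, the paper's own proof: the same reduction from \textsc{Partition}, the same gadget of heavy separator keys with identical frequency in both scenarios forcing a rigid skeleton, the same pairs of adjacent light keys whose two orientations differ in cost by exactly $a_i$ with the roles of the two keys swapped between scenarios, the same threshold $\mathrm{OPT}+S/2$, and the same closing observation that equal per-scenario optima make the cost, competitive-ratio and regret versions hard simultaneously. The one substantive difference is that the paper pads $m$ to a power of two and uses $2^\ell-1$ separators, so that any tree meeting the threshold must place them in the \emph{unique} complete tree occupying the top $\ell$ levels; this single counting line (a separator pushed below level $\ell$ already costs more than $V$) is exactly your flagged ``rigidity lemma,'' and it also pins every pair at levels $\ell+1,\ell+2$, which makes your cost decomposition exact with an explicitly computable constant in place of $\mathrm{OPT}$. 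In your unpadded variant with $m+1$ separators the skeleton is not unique, so your identity $\mathrm{cost}(L,F^1)=\mathrm{OPT}+\sum_{i\in B}a_i$ should really read $\mathrm{cost}(L,F^1)=\mathrm{base}(S_L)+\sum_{i\in B}a_i$ with $\mathrm{base}(S_L)\geq\mathrm{OPT}$ (this still suffices, because the skeleton's base cost is scenario-independent and hence cancels in the threshold argument), and the exchange argument must compare against the best skeleton with $H$ exceeding the \emph{total} light mass times the tree depth, roughly $\Omega((mc+S)\log m)$ rather than your stated $O(S)$. All of this is repairable, but the power-of-two padding is precisely the device by which the paper dispenses with the technical crux you left open.
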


\begin{proof}
The proof is based on a reduction from the \textsc{Partition} problem~\cite{garey1979computers}.
An instance of this problem consists of a list $a_1,...,a_m$ of non-negative integers, and the goal is to decide whether there exists a binary vector $b\in\{0,1\}^m$ with $\sum_i b_i a_i=\sum_i (1 - b_i)a_i$. Without loss of generality we can assume that $m$ is of the form $m = 2^\ell$ for some integer $\ell$, as we can always pad the instance with zeros. 

From the given instance of the partition problem, we define an instance of the robust BST problem which consists of $n=3m-1$ keys and two frequency vectors 
\begin{align*} 
    F^1 &= ( a_1, 0, w, a_2, 0, w, a_3, 0, \ldots, w, a_m,0) \text{ and}\\
	F^2 &= ( 0, a_1, w, 0, a_2, w, 0, a_3, \ldots, w, 0,a_m),
\end{align*}
where $w$ is any constant strictly larger than $(\ell+3/2)\sum_i a_i$. Moreover, the instance specifies the threshold $V=W+ (\sum a_i)/2$, where $W$ is defined as 
\[
W := w\sum_{j=1}^{\ell} j 2^{j-1} + (\ell+1)\sum_{i=1}^n a_i.
\]

We claim that the level vector $L$ which optimizes $\max\{\text{cost}(L,F^1), \text{cost}(L, F^2)\}$ has the following structure. The first $\ell$ levels form a complete binary tree over all keys $3i$ for $i=1,2,\ldots,m$. These are exactly the keys with frequency $w$ in both $F^1$ and $F^2$. (Intuitively, since $w$ is large, these keys should be placed at the smallest levels.) Furthermore, for each $i=1,2,\ldots,m$, the keys $3i-2$ and $3i-1$ are placed at levels $\ell+1$ and $\ell+2$, or at levels $\ell+2$ and $\ell+1$, respectively. Such a tree has a cost at least $W$ in each scenario and at most $W+\sum a_i$. The claim that $L$ has this structure follows from the fact that if some key with frequency $w$ were to be placed below level $\ell$, then the tree would incur a cost of at least $w(1 + \sum_{j=1}^{\ell} j 2^{j-1})$, which is strictly greater than $V$. See Figure~\ref{fig:np-hardness-bst} for an illustration. 

\begin{figure}[htb]
    \centering
    \includegraphics[scale=0.2]{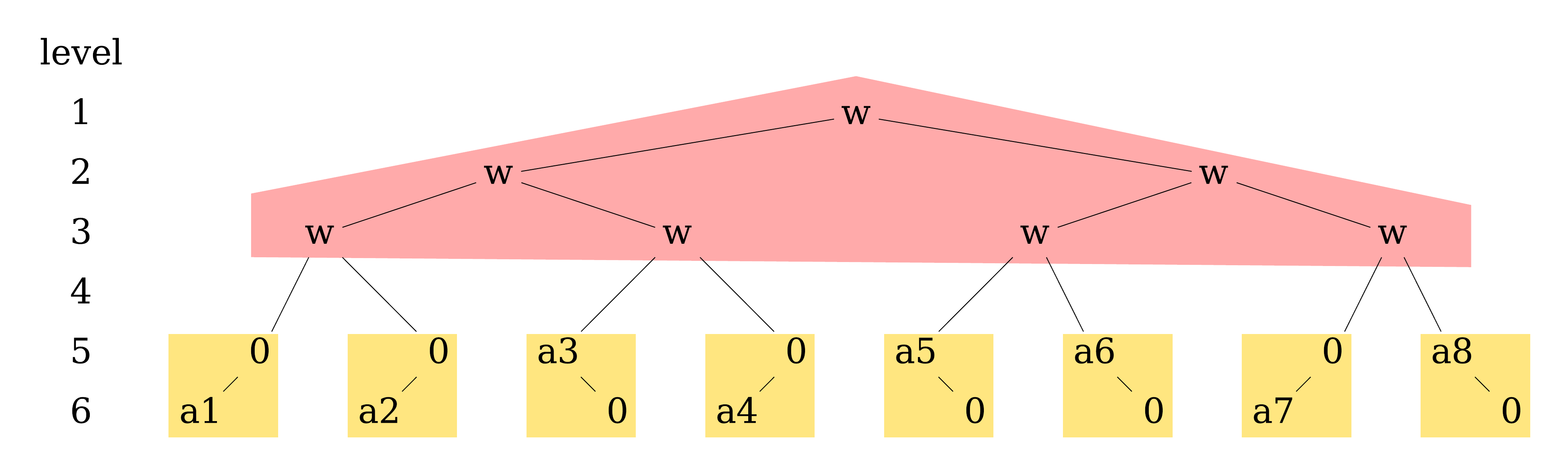}
    \caption{The BST corresponding to the binary vector $b=\{11010010\}$ in the NP-hardness proof construction. Nodes are labeled with the frequency of their keys in $F^1$.}
    \label{fig:np-hardness-bst}
\end{figure}	

As a result, all solutions to the robust BST problem can be described by a binary vector $b\in\{0,1\}^{m}$, such that for every $i=1,2,\ldots,m$ the key $3i-2$ has level $\ell+1+b_i$ while key $3i-1$ has level $\ell+2-b_i$. We denote by $L^b$ the level vectors of these trees in order to express the costs as $\textrm{cost}(L^b,F^1) = W + \sum_i b_i a_i$ and $\textrm{cost}(L^b,F^2) = W + \sum_i (1-b_i)a_i$.

Observe that if both costs of $L^b$ are at most $V$, then they are equal to this value and $b$ is a solution to the partition problem.  This implies that deciding if there exists a BST of cost at most $V$ is NP-hard. 

The proof extends to the other two metrics, namely the competitive ratio, and the regret. Observe that level vector $L^b$ for $b=(0,\ldots,0)$  minimizes the cost for the first scenario, while for $b=(1,\ldots,1)$ it minimizes the cost for the second scenario, and that both costs are equal to $W$. Hence, it is also NP-hard to decide if there is a BST of competitive ratio at most $(W+(\sum a_i)/2) / W$. Similarly, it is NP-hard to decide if there is a BST of regret at most $(\sum a_i)/2$.
\end{proof}	

Next, we present an algorithm that achieves the optimal competitive ratio. Algorithm~\ref{alg:rbst} first computes optimal trees with level vector $L^s$ for each scenario $s=1, \ldots, k$, then calls a recursive procedure $A$ with initial values $i=1$, $j=n$ and $\ell=1$. Procedure $A$ solves a subproblem defined by $i,j,\ell$. Namely, it constructs a BST on the keys $[i,j]$ and stores its root at level $\ell$. To this end, it first identifies the key(s) in $[i,j]$ at the minimum level in the optimal trees across all scenarios $L^s$ (lines 5 and 6) and stores them in the ordered set $S$. If an odd number of keys share this minimum level, the algorithm selects the key to be stored at level $\ell$ as the median of the set $S$. Otherwise, it chooses the key closest to the median value (line 7).

\begin{algorithm}[H]
\begin{algorithmic}[1]
\Statex \textbf{Input:} $k$ scenarios described by frequency vectors $F^1, \ldots, F^k$. 
\Statex \textbf{Output:}  A robust BST represented as a level vector $L$.
\State For each scenario $s \in \{1, \ldots, k\}$, compute optimal tree $L^s$
\State $A(i=1,j=n,\ell=1)$
\end{algorithmic}
\caption{Algorithm {\sc R-bst}}
\label{alg:rbst} 
\end{algorithm}

\begin{algorithm}[H]
\begin{algorithmic}[1]
\Procedure{A}{$i,j,\ell$}
    \If{$i>j$}
        \State \textbf{return} \Comment{Empty interval $[i,j]$ ends recursion}
    \Else
        \State Let $v$ be the minimum level $L^s_r$ over $i\leqslant r\leqslant j$ and $1\leqslant s\leqslant k$.
        \State Let $S$ be the ordered set of keys  $i\leqslant r\leqslant j$ such that there exists a scenario $s$ with $L^s_r=v$.
        \State Let $m\in S$ be s.t.\ both $|S\cap [i,m-1]|$ and $|S\cap[m+1,j]|$ are at most $\lceil (|S|-1)/2\rceil$. 
        \State Set $L_m=\ell$
        \State $\Call{A}{i, m-1, \ell+1}$ \Comment{Left recursion}
        \State $\Call{A}{m+1,j,\ell+1}$  \Comment{Right recursion}	
    \EndIf
\EndProcedure
\end{algorithmic}
\caption{Recursive procedure $A$}
\label{algorithm_A}
\normalsize
\end{algorithm}

We show that {\sc R-bst} has competitive ratio  logarithmic in~$k$:

\begin{theorem} \label{th:alg.bst.ratio}
    For every scenario $s$, {\sc R-bst} constructs a BST of cost at most $\lceil \log_2(k+1) \rceil$ times the cost of the optimal BST $L^s$. 
\end{theorem}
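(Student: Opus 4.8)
The plan is to prove the stronger, per-key statement that for every scenario $s$ and every key $i$,
\[
L_i \;\le\; \lceil \log_2(k+1)\rceil \cdot L^s_i,
\]
from which the theorem follows immediately by multiplying by $F^s_i \ge 0$ and summing over $i$, since $\textrm{cost}(L^s,F^s)=\min_{L^*}\textrm{cost}(L^*,F^s)$. So I fix a scenario $s$ and a key $i$, write $d = L^s_i$, and follow the unique chain of recursive calls $A(i',j',\ell')$ whose interval $[i',j']$ contains $i$, down to the call in which $i$ is chosen as the median $m$ and assigned $L_i=\ell'$. Since $\ell'$ starts at $1$ and increases by one per call, $L_i$ equals the number of calls on this chain, so it suffices to bound that number by $d\,\lceil\log_2(k+1)\rceil$.

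The key structural observation is that $|S|\le k$ in every call. Indeed, by Definition~\ref{level_definition}, within any contiguous key range $[i',j']$ and any single optimal tree $L^s$ there is a \emph{unique} key attaining the minimum level of $L^s$ over that range: two keys at a common minimum level would, by the level-vector property, force a strictly smaller level at some key strictly between them, contradicting minimality. Hence each of the $k$ scenarios contributes at most one key to the set $S$ of keys attaining the global minimum level $v$ over $[i',j']$, giving $|S|\le k$; at the root call this reads $|S|\le k$ since $S$ consists of the roots of the $k$ trees $L^s$.

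Next I partition the chain into \emph{phases} according to the value of the minimum level $v$, which is non-decreasing along the chain. Within a single phase $v$ is constant, and when the recursion descends into the sub-interval containing $i$ the new minimum level is still $v$, so the new set is exactly the restriction of the old $S$ to that sub-interval; by the median rule (line~7) this restriction has size at most $\lceil(|S|-1)/2\rceil$. Iterating the map $\sigma\mapsto\lceil(\sigma-1)/2\rceil$ from $|S|\le k$ reaches $0$ in at most $\lceil\log_2(k+1)\rceil$ steps, so each phase consists of at most $\lceil\log_2(k+1)\rceil$ calls. Moreover, because $i$ always lies in the current interval and $L^s_i=d$, the minimum level satisfies $v\le d$ throughout; being a non-decreasing integer in $\{1,\dots,d\}$, it takes at most $d$ distinct values, hence there are at most $d$ phases. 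Combining the two bounds gives $L_i\le d\,\lceil\log_2(k+1)\rceil$, as required.

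The main obstacle is isolating and proving the invariant $|S|\le k$, together with the clean behaviour of $S$ across the phase structure: within a phase $S$ can only shrink, and precisely by the median halving rule, whereas a phase boundary (where $v$ strictly increases because the side of $S$ containing $i$ becomes empty) can occur only while $i\notin S$, i.e.\ while $v<d$. Verifying that $S$ never jumps up within a phase — so that the halving recurrence applies — and that phase boundaries are charged against distinct levels $v\le d$ are the two points requiring care; once these are established, the count $(\text{phases})\times(\text{calls per phase})\le d\cdot\lceil\log_2(k+1)\rceil$ is immediate.
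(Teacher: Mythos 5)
Your proof is correct and follows essentially the same route as the paper's: the same key observation that Definition~\ref{level_definition} forces $|S|\leqslant k$, the same halving of $S$ under the median rule within a run of constant $v$, and the same counting giving $\ell \leqslant \lceil \log_2(k+1)\rceil\, v \leqslant \lceil \log_2(k+1)\rceil\, L^s_i$. Your phase decomposition is just a more explicit, per-key restatement of the paper's invariant, with the details (uniqueness of the per-scenario minimizer, behaviour of $S$ under restriction) filled in carefully.
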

\begin{proof}
Let $R=\lceil \log_2(k+1) \rceil$ be the claimed competitive ratio. We will show that in line 5 of Algorithm \ref{algorithm_A}, we have $\ell \leqslant R v$. This implies that for every key $a$, and every scenario $s$, we have $L_a\leqslant R \cdot L^s_a$, which suffices to prove the theorem.

First, we observe by the definition of level vectors that for every scenario $s$, there is at most one key $i\leqslant r\leqslant j$ with $L^s_r=v$. This implies that $|S|\leqslant k$. Furthermore, with each recursive call, the level $\ell$ increases by $1$, while either $v$ increases as well, or $|S|$ decreases to at most $\lceil(|S|-1)/2\rceil$.
It remains to show that after at most $R$ recursive calls, the value $v$ has strictly increased. 

To this end, given some $x$, define $h(x)$ as the largest value such that $\lceil(h(x)-1)/2\rceil=x$. This function is $h:x\mapsto 2x+1$. The sequence $0,h(0),h(h(0)), \ldots$ is thus a sequence of numbers of the form $2^i-1$, with $i \in \mathbb{N}$. Hence the maximum number of recursions needed until $v$ strictly increases is the smallest $x$ such that $k\leqslant 2^x - 1$, which is precisely $R$.
\end{proof}

We can implement {\sc R-bst} so that it runs in time $O(kn^2)$, i.e., the run time is dominated by the time required to compute the optimal BSTs for each scenario. 

\begin{theorem}
{\sc R-bst} can be implemented in time 
$O(kn^2)$.
\label{prop:runtime.rbst}
\end{theorem}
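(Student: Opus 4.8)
The plan is to establish the $O(kn^2)$ running time by accounting separately for the two phases of \textsc{R-bst}: the computation of the $k$ optimal trees (line 1 of Algorithm~\ref{alg:rbst}), and the recursive construction performed by procedure $A$ (Algorithm~\ref{algorithm_A}). For the first phase, Knuth's dynamic program computes an optimal level vector $L^s$ for a single scenario in time $O(n^2)$; running it independently for each of the $k$ scenarios therefore costs $O(kn^2)$. The goal is to show that the second phase can be implemented within this same budget, so that it does not dominate.

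\medskip

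For the second phase, I would first argue that the collection of all recursive calls to $A$ forms a binary recursion tree whose nodes correspond to disjoint subintervals $[i,j]$ of $[1,n]$, and that each key is assigned its level $L_m$ in exactly one call. Hence there are $O(n)$ calls that do real work (plus $O(n)$ terminating calls on empty intervals). The only nontrivial cost inside a call is computing, for the interval $[i,j]$, the minimum level $v$ over all $i\le r\le j$ and all scenarios $s$ (line~5), the ordered set $S$ of keys attaining it (line~6), and the (near-)median element $m$ of $S$ (line~7). A naive scan over the interval and over all scenarios costs $O(k(j-i+1))$ per call, giving $O(kn)$ per call and $O(kn^2)$ overall in the worst case — which already matches the claimed bound. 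So the main point is simply to verify that each individual call can be carried out in $O(k\cdot(j-i+1))$ time: scanning the $j-i+1$ keys across the $k$ level vectors to find $v$ and to collect $S$ is linear in $k(j-i+1)$, and selecting the median of $S$ takes $O(|S|)=O(j-i+1)$ time (e.g.\ by linear-time selection, or trivially since $S$ is already produced in sorted key order).

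\medskip

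The slightly delicate step is bounding the total work of the second phase correctly, since summing $O(k(j-i+1))$ over all calls must not exceed $O(kn^2)$. Here I would note that the intervals at any fixed recursion depth are disjoint, so the lengths $j-i+1$ of the calls at a single depth sum to at most $n$, contributing $O(kn)$ per depth; multiplying by the recursion depth, which is $O(n)$ in the worst case, yields $O(kn^2)$. This confirms that phase two is dominated by (or at most equal to) phase one, and the overall running time is $O(kn^2)$ as claimed. The main obstacle to watch for is ensuring the per-call operations of lines~5--7 are genuinely implementable in the stated linear time and that no hidden logarithmic or superlinear factor from the median selection or from maintaining $S$ sneaks in; since all of these are elementary linear scans or linear-time selection, this is routine to verify.
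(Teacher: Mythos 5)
Your proof is correct, but it takes a genuinely different route from the paper's. The paper does not use a naive per-call scan: it precomputes the pointwise minimum vector $M$ with $M[j]=\min_s L^s_j$ in $O(kn)$ time, together with auxiliary tables $P$ (for each value $v$, the sorted list of positions $j$ with $M[j]=v$) and $Q$ (ranks), and stores $M$ and $Q$ in a segment tree supporting range-minimum-with-rank queries in $O(\log n)$ time. Each call of procedure $A$ then costs $O(\log n)$: the query on $[i,j]$ returns the minimum $v$ together with the ranks $s,e$ of its first and last occurrences in that interval, and $m$ is read off as the $\lfloor (s+e)/2\rfloor$-th element of the list $P[v]$. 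The whole construction phase thus runs in $O(kn+n\log n)$. Your argument instead scans $[i,j]$ across all $k$ level vectors inside each call, at cost $O(k(j-i+1))$, and bounds the total by noting that the intervals at any fixed recursion depth are disjoint (so each depth costs $O(kn)$) and the depth is $O(n)$, giving $O(kn^2)$ for the construction phase; your observation that the near-median of line~7 is immediate because $S$ is produced in sorted key order is also sound. Both accountings are valid, and both give the claimed $O(kn^2)$ overall since the $k$ runs of Knuth's dynamic program dominate. What the paper's implementation buys is a much faster construction phase ($O(n\log n)$ versus your $O(kn^2)$), which would matter if the optimal trees $L^s$ were given in advance or the first phase were ever sped up; what yours buys is simplicity, avoiding the segment tree and rank bookkeeping entirely. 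A small refinement available to you at no cost in machinery: precompute $M[j]=\min_s L^s_j$ once in $O(kn)$ time, so that each call scans only the single vector $M$, bringing your construction phase down to $O(kn+n^2)$.
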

\begin{proof}
Computing the level vectors of the optimal BSTs for each scenario requires time $O(k n^2)$~\cite{knuth1971optimum}. From this, we can compute a vector $M$ with $M[j]=\min_s L^s_j$, in time $O(kn)$. In addition, we use another table $P$, containing lists entries to keep track of the positions of each value in $M$. Formally, $P[v]$ is the list of all positions $j$ with $M[j]=v$. Table $P$ is constructed by a simple left-to-right processing of $M$, and as a result all the lists are sorted. During this process, we also create a table $Q$ with ranks, such that $Q[j]$ is the number of entries in $M$ between index $1$ and $j$ which has the value $M[j]$. $P$ and $Q$ are complementary in the sense that the $i$-th entry $j$ in the list $P[v]$, is such that $M[j]=v$ and $Q[j]=i$. Finally, we store the arrays $M$ and $Q$ in a single {\em segment tree}, which allows to return for a given index interval $[i,j]$, the minimum value $v$ in $M$ between indices $i$ and $j$, together with the ranks of the first occurrence of $v$ and the last one in this interval. Let $s,e$ be these ranks. 

Line 7 of Algorithm~\ref{algorithm_A} can be implemented by querying the segment tree at the interval $[i,j]$, and receiving $v,s,e$ as a result. Then the algorithm chooses $m$ as the $\lfloor (s+e)/2 \rfloor$-th element in the list $P[v]$. Each query to the segment tree takes time $O(\log n)$. Since the resulting tree has $n$ nodes, the construction of the tree takes time $O(n\log n)$, which is also the pre-processing time of the segment tree. The total running time of the algorithm is therefore dominated by the time to build the level vectors $L^s$, i.e., at most $O(k n^2)$.
\end{proof}

We conclude this section by showing that 
{\sc R-bst} achieves the optimal competitive ratio.
\begin{theorem}
    There exists a collection of $k$ scenarios, described by vectors $F^1, \ldots ,F^k$, such that for every BST with level vector $L$, there exists a scenario $s$ for which  $\textrm{cost}(L,F^s)\geqslant \lceil \log_2(k+1) \rceil \textrm{cost}(L^s, F^s)$.
\label{prop.bst.lb}
\end{theorem}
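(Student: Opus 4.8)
The plan is to construct a family of $k$ scenarios forcing the competitive ratio of \emph{any} BST to be at least $R := \lceil \log_2(k+1)\rceil$, thereby matching the upper bound of Theorem~\ref{th:alg.bst.ratio}. The idea is to make every scenario a point mass on a distinct key. Concretely, I would take $n = k$ keys and let $F^s$ be the indicator of key $s$, i.e.\ $F^s_s = 1$ and $F^s_i = 0$ for $i \neq s$ (such vectors with zero entries are already used in the Example of Section~\ref{subsec:bst.background}). Under this choice the optimal tree for scenario $s$ places key $s$ at the root, so $\textrm{cost}(L^s,F^s)=1$; and for an arbitrary level vector $L$ the cost collapses to $\textrm{cost}(L,F^s)=L_s$, the level of key $s$ in $L$. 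Hence the theorem reduces to the claim that in every BST on these $k$ keys, some key has level at least $R$.

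The core of the proof is then a pure counting argument about binary trees, which does not even invoke the search-ordering property (it uses only that a BST is a binary tree). Since level $t$ contains at most $2^{t-1}$ nodes, the levels $1,\dots,R-1$ jointly contain at most $\sum_{t=1}^{R-1}2^{t-1}=2^{R-1}-1$ nodes. I would then verify the elementary inequality that $R=\lceil \log_2(k+1)\rceil$ forces $R-1<\log_2(k+1)$, and therefore $2^{R-1}-1<k$. Consequently the $k$ keys cannot all be accommodated at levels at most $R-1$, so by pigeonhole there is a key $s$ with $L_s\geq R$. For that scenario, $\textrm{cost}(L,F^s)/\textrm{cost}(L^s,F^s)=L_s\geq R$, which is exactly what the theorem demands.

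I do not expect a genuine obstacle: the construction turns the statement into an elementary fact about the number of nodes in the top levels of a binary tree. The only point requiring care is matching the threshold $2^{R-1}-1<k$ precisely to the ceiling defining $R$, in particular checking the boundary cases where $k+1$ is a power of two and the degenerate case $k=1$ (where $R=1$ and the bound is vacuous). If one prefers to avoid zero frequencies, the indicators can be replaced by distributions assigning weight $1-(k-1)\varepsilon$ to the distinguished key and $\varepsilon$ elsewhere; this only perturbs the optimal costs slightly and does not strengthen the bound, so I would retain the clean indicator construction.
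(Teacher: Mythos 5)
Your proposal is correct and follows essentially the same route as the paper's own proof: point-mass scenarios on $n=k$ keys, unit optimal cost per scenario, and a counting argument showing some key must lie at level at least $\lceil \log_2(k+1)\rceil$. The only (minor) difference is that the paper restricts attention to $k$ of the form $2^\ell-1$, whereas your ceiling-inequality check $2^{R-1}-1<k$ handles arbitrary $k$ directly, which is a slightly cleaner way to cover all cases.
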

\begin{proof}
Consider the following $k$ scenarios $F^1, \ldots, F^k$ on $n=k$ keys, for some $k$ of the form $2^\ell-1$, defined as
\begin{align*}
	F^1  &= (1,0,  \ldots ,0), \\ 
    F^2  &= (0,1, \ldots, 0), \\ 
    &\ldots \\
    F^k  &= (0,0, \ldots, 1).
\end{align*}

Note that for each scenario, the corresponding optimal tree has cost $1$.
Since the complete tree with $\ell$ levels has $2^{\ell}-1$ nodes, any tree with $k$ nodes has at least one key at level $\lfloor \log_2(k+1)\rfloor$, and therefore cost at least $\ell$ in one of the scenarios. Hence, for every tree level vector $L$, there exists $s$ such that $\textrm{cost}(L,F^s)\geqslant \lceil \log_2(k+1) \rceil \textrm{cost}(L^s, F^s)$.
\end{proof}

\section{Robust Huffman trees}
\label{sec:huff}

\subsection{Background and Measures}
\label{subsec:ht.background}

In the standard version of the Huffman tree problem, we are given $n$ keys (e.g., letters of an alphabet), along with a frequency vector (e.g., the frequency of each letter in said alphabet). The objective is to build a binary tree in which every key corresponds to a {\em leaf}, so as to minimize the inner product $\sum_i F_i L_i$, where $L_i$ is the level of key $i$ in the tree. By labeling the edges leaving each inner node with $0,1$, arbitrarily, we can associate a \emph{codeword} to each key, namely the concatenation of the edge labels on the path from the root to the leaf corresponding to the key in question. In this setting, we assume that the levels of the tree start from 0. Hence, conveniently, the length of a codeword equals the level of the corresponding leaf. The resulting set of codewords is \emph{prefix free}, in the sense that no codeword is a prefix of another codeword. The optimal tree can be computed in time $O(n \log n)$~\cite{huffman1952method}.

We study Huffman coding in a robust setting in which we are given $k$ $n$-dimensional frequency vectors $F^1,\ldots,F^k$, each describing a different scenario. We say that a level vector $L$ is \emph{valid} if there is a prefix-free code such that key $i$ has a codeword of length $L_i$. In other words, for any valid level  vector $L$, there is a Huffman tree with $|\{i: L_i = a\}|$ leaves on level $a$. The cost of $L$ in a given scenario $s$ is defined as the inner product $F^s \cdot L = \sum_i F^s_i L_i$.

We can analyze robust HTs using the same measures as for robust BSTs, i.e., using cost-minimization, competitive ratio, or regret-based analysis; recall the discussion in Section~\ref{subsec:bst.background}. For Huffman trees, in particular, we will rely on regret-based analysis, which establishes more refined performance guarantees than competitive ratio; c.f. Corollary~\ref{cor:ht.comp} which shows that our results for regret essentially carry over to the competitive ratio.

\subsection{Results}
\label{subsec:ht.results}

We begin by showing that finding an optimal robust HT is NP-hard, even in the case of only two scenarios. The proof differs substantially from the NP-hardness proof for robust BSTs (Theorem~\ref{thm:bst.nphard}). This is due to the differences in the two settings: in a BST, keys are stored in each node, whereas in a HT, keys are stored only at leaf nodes. As a result, the reduction is technically more involved, and is from a problem that induces more structure, namely the 	\textsc{Equal-Cardinality Partition} problem \cite{garey1979computers}.

We first formulate the decision variant of the robust HT problem:  Given two frequency vectors, $F^1$ and $F^2$, and a threshold $V$, we must decide whether there exists a HT such that its cost does not exceed $V$ in either scenario.

The following theorem relies on a proof that differs substantially from the corresponding NP-hardness proof for robust BSTs (Theorem~\ref{thm:bst.nphard}). As discussed previously, these two problems differ significantly, primarily because the keys are located on the leaves in a Huffman tree. This structural difference forces us to establish entirely new properties to achieve the reduction and complete the NP-hardness proof. For the former, we need to use a different variant of the partition problem with equal cardinalities to satisfy all the conditions of the Huffman tree and achieve the desired construction.

\begin{theorem} \label{thm:ht.nphard}
	The robust HT problem is NP-hard, even if $k=2$.
    This holds for all three metrics, i.e.,
    for minimizing the cost, the competitive ratio, or the regret.
\end{theorem}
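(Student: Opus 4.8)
\section*{Proof proposal for Theorem~\ref{thm:ht.nphard}}

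The plan is to reduce from \textsc{Equal-Cardinality Partition} (ECP): given $a_1,\dots,a_m$ with $m$ even, decide whether there is $D\subseteq\{1,\dots,m\}$ with $|D|=m/2$ and $\sum_{i\in D}a_i=\tfrac12\sum_i a_i$. The reason this variant---rather than plain \textsc{Partition}, used for BSTs in Theorem~\ref{thm:bst.nphard}---is the right source is structural: in a Huffman tree the leaves at any fixed level are interchangeable, so a solution can only control the \emph{number} of leaves placed at each level, not a private pair of adjacent slots per item as in the BST construction. Consequently the tree geometry pins the cardinality of the ``deep'' set of items, and the partition itself must respect that cardinality.

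First I would fix the $m$ \emph{item keys} $z_1,\dots,z_m$ by $F^1_{z_i}=a_i$ and $F^2_{z_i}=C-a_i$, where $C:=\tfrac{2}{m}\sum_i a_i$. After a harmless preprocessing step---adding a large constant to every $a_i$, which preserves ECP precisely because both parts have the same cardinality $m/2$ (note this shift does \emph{not} preserve plain \textsc{Partition}, another reason ECP is the correct source)---we may assume $C>\max_i a_i$, so all frequencies are nonnegative. The key algebraic feature of this choice is that for \emph{any} valid level vector $L$, the item keys satisfy $\sum_i L_i\,(F^1_{z_i}+F^2_{z_i})=C\sum_i L_i$; that is, the \emph{sum} of the two scenario costs depends only on the multiset of item levels, not on which item sits where.

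Next I would introduce a block of \emph{filler keys} carrying a large, scenario-independent frequency $w$, playing the role of the $w$-keys in Theorem~\ref{thm:bst.nphard}. They are chosen in number and weight so that in any tree of cost at most the target threshold $V$ the fillers must occupy every leaf position strictly above level $\ell$ together with all but exactly $m/2$ of the positions at level $\ell$, thereby (i) forbidding any item key above level $\ell$, and (ii) leaving exactly $m/2$ leaf slots for items at level $\ell$, with at least $m/2$ further slots available at level $\ell+1$. Here $\ell\approx\log_2 m$, so the number of fillers, and hence $n$, is polynomial. Setting $V$ so that the fixed filler contribution plus $C(m\ell+m/2)$ equals $2V$, the ``sum is fixed'' identity forces the item levels to realize precisely the balanced two-level split ($m/2$ items at level $\ell$, $m/2$ at level $\ell+1$), since any deeper placement makes $\sum_i L_i$ larger and pushes $\mathrm{cost}(L,F^1)+\mathrm{cost}(L,F^2)$ above $2V$. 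On this split the two costs sum to exactly $2V$, so both are $\le V$ iff both equal $V$, i.e.\ iff the deep set $D$ satisfies $\sum_{i\in D}a_i=\tfrac12\sum_i a_i$---exactly a solution to ECP.

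The main obstacle is the rigidity of the filler gadget: proving that no tree of cost $\le V$ can deviate from the intended skeleton (for instance by pushing a filler below level $\ell$ to free a shallow slot for several items, or by opening extra branches at level $\ell$ to send more than $m/2$ items deep). This requires a careful counting argument based on the Kraft inequality $\sum_i 2^{-L_i}\le 1$, balancing the Kraft budget consumed by shallow leaves against the cost blow-up caused by the heavy fillers; it is exactly the point at which the equal-cardinality requirement is genuinely needed. Finally, to extend hardness from the cost metric to the competitive ratio and the regret, I would symmetrize the instance---e.g.\ adjoining, for each item, a mirror item with the two scenario frequencies swapped, which preserves the ECP encoding---so that the two single-scenario optima coincide at a common value $W$; as in Theorem~\ref{thm:bst.nphard}, the cost threshold $V$ then translates into a competitive-ratio threshold $V/W$ and a regret threshold $V-W$.
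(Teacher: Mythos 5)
Your reduction for the cost metric is essentially the paper's: the same source problem (\textsc{Equal-Cardinality Partition}), the same central trick of giving each item two scenario frequencies that sum to a constant so that $\mathrm{cost}(L,F^1)+\mathrm{cost}(L,F^2)$ depends only on the multiset of item levels (the paper uses $M+a_i$ and $M-a_i$; you use $a_i$ and $C-a_i$), the same heavy-filler gadget forcing a two-level skeleton whose geometry pins the cardinality of the deep set, and the same threshold argument that both costs are at most $V$ if and only if both equal $V$. The structural-rigidity step you defer to a Kraft-inequality counting argument is exactly the step the paper discharges more cleanly by making \emph{all} keys, items included, have frequency $\Theta(M)$, so that the forced shape follows from the optimality of the complete tree under near-uniform frequencies; your lighter items make this step more delicate (and your slot count is inconsistent: leaving ``all but exactly $m/2$'' level-$\ell$ positions open leaves no room for the $m/2$ items that must sit at level $\ell+1$), but this part is repairable.

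The genuine gap is the final step, the extension to competitive ratio and regret, and it fails as stated. Your symmetrization adjoins, for each item with frequencies $(a_i, C-a_i)$, a mirror item with frequencies $(C-a_i, a_i)$. This does equalize the two single-scenario optima, but it also makes \emph{every} instance a yes-instance: since $C=\tfrac{2}{m}\sum_i a_i$, the split that places all $m$ original items deep and all $m$ mirror items shallow has deep-set weight $\sum_i a_i = \tfrac{mC}{2}$ in scenario 1 and $\sum_i (C-a_i)=\tfrac{mC}{2}$ in scenario 2, i.e., it is perfectly balanced in both weight and cardinality regardless of whether the original ECP instance is solvable. Under your forced skeleton the minimum of $\max_s \mathrm{cost}(L,F^s)$ is therefore always exactly the threshold (and the minimum regret always exactly threshold minus $W$), so the mirrored instance carries no information about ECP and no threshold can separate yes from no. The paper avoids this trap with a different mechanism: a single extra key $u'$, forced to level $1$ by huge frequencies $C_1$ and $C_2=C_1+2D(2m+5)$ in the two scenarios; this shifts each scenario's cost (both of the candidate robust tree and of the single-scenario optimum) by a controllable constant, equalizing the two optima without introducing any new placement freedom for the items. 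You would need this, or an equivalent scenario-asymmetric constant-offset gadget, in place of the mirroring.
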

\begin{proof}
We first formulate the decision variant of the robust HT problem:  Given two frequency vectors, $F^1$ and $F^2$, and a threshold $V$, we must decide whether there exists a HT such that its cost does not exceed $V$ in either scenario.

We will prove the result for the cost-minimization metric. At the end of the proof, we argue that it applies straightforwardly to the competitive ratio and the regret minimization as well.

	The proof is based on a reduction from the
	\textsc{Equal-Cardinality Partition} (ECP)
	problem \cite{garey1979computers}. An instance to this problem consists of a list $a_1,...,a_{2n}$ of non-negative integers with $\sum_i a_i = 2D$, and the goal is to decide whether there exists a binary vector $b\in\{0,1\}^{2n}$ such that $\sum_i b_i a_i=\sum_i (1 - b_i)a_i = D$ and in addition $\sum_i b_i = n$. Without loss of generality, we can assume that $n$ is of the form $n = 2^m$ for some integer $m$, as we can always pad the instance with zeros. 

	From the given instance of ECP, we define an instance of the robust HT problem, which consists of $3n+1$ keys, a threshold $V$ (to be determined later in the proof) and two scenarios. The keys are $v_1,\ldots,v_{2n}$ and $u_1,\ldots,u_n,u'$, having the following frequencies in the two scenarios:
    
    \begin{center}
    \begin{tabular}{l|l|l|l|}
        frequency  & $v_i$        & $u_j$ & $u'$ \\ \hline 
        scenario \textbf{1} & $M+ a_i$     &  $M $  & $C_1$ \\
        scenario \textbf{2} & $M- a_i$ &  $M $  & $C_2$ \\
    \end{tabular}
    \end{center}
    
    Here, $C_1, C_2$ and $M$ are constants that will be specified later. Their role is to ensure that key $u'$ is at level $1$ for the robust HT, while also guaranteeing that the optimal trees for scenario \textbf{1} and for scenario \textbf{2} have the same cost. This can be achieved by a sequence of calculations. First, we can assume that both optimal trees have the same structure, meaning that key $u'$ is at level 1 and there are exactly $n$ keys at level $m+2$ and $2n$ keys at level $m+3$. We can achieve this by choosing a large enough constant $M$, which will also be specified later. Next, we focus on the constants $C_1$ and $C_2$, assuming that the list of integers $a_1, a_2, ..., a_{2n}$ is ordered from largest to smallest. We define the sums of the $n$ smallest and $n$ largest values as $X$ and $Y$, respectively, and use them in the following expressions:
    \begin{align*}
        \textrm{cost}(L,F^1) &= C_1 + M(2n(m+3)+n(m+2)) + (m+2)Y + (m+3)X, \textrm{ and} \\
        \textrm{cost}(L,F^2) &= C_2 + M(2n(m+3)+n(m+2)) - (m+2)X - (m+3)Y.
    \end{align*}
    Since we know that $X+Y=2D$, we obtain the desired property $\textrm{cost}(L,F^1) = \textrm{cost}(L,F^2)$ by setting $C_2 = C_1 + 2D(2m+5)$. To simplify our notation, let us define $z:=2n(m+3)+n(m+2)$, as this expression will be used frequently later in the proof.
    
    To establish the reduction, we will show that there is a HT on these $3n+1$ keys with cost at most 
    \[
        V := C_1 + zM + D(2m+5)
    \] 
    in each scenario if and only if there is a solution $b$ to the partition problem.

    For the one direction of the proof, let $b$ be the solution to the partition problem. We construct a HT, in which for every $i=1,\ldots,2n$, the key $v_i$ is located at level $m+2$ if $b_i=0$ and at level $m+3$ if $b_i=1$. All $n$ keys $u_j$ are at level $m+3$ and key $u'$ is at level 1. See Figure~\ref{fig:huffman-np-hardness} for an illustration.
    
    In scenario \textbf{1} the cost of the tree is 
    \begin{align*}
        & C_1 + zM + (m+2)\sum_i a_i(1- b_i)  + (m+3)\sum_i a_i b_i =
        \\
        & C_1 + zM + (m+2)\sum_i a_i  + \sum_i a_i b_i = \\
        & C_1 + zM + (m+2)2D  + D.
    \end{align*}
    
    In scenario \textbf{2} the cost of the tree is 
    \begin{align*}
        & C_2 + zM - (m+2)\sum_i a_i(1- b_i)  - (m+3)\sum_i a_i b_i =
        \\
        & C_1 + zM - (m+2)2D  - D.
    \end{align*}
    By the choice of $C_2$, both costs are equal, and are also equal to $V$.
    
	For the other direction of the proof, let $L$ be the level vector of a HT with cost at most $V$ in each scenario. First, we claim that key $u'$ has to be at level 1 by the assumption that $C_1$ and $C_2$ are huge constants. Let $u''$ be the \emph{sibling} of $u'$, i.e. the other node at level $1$. Second, we claim that the sub-tree rooted at $u''$ is \emph{complete}, in the sense that there are exactly $2n$ keys at level $m+3$ and $n$ keys at level $m+2$. This structure, again as depicted in Figure~\ref{fig:huffman-np-hardness}, is imposed by the contribution of $M$ in the frequencies as we argue next. Consider a Huffman tree $T$ on $3n+1$ keys, all with uniform frequencies $M$ except for key $u'$ which has frequency $C$. If $T$ has the above mentioned structure, then it has cost $C+zM$. If $u'$ is not at level $1$, its cost is at least $2C$. Moreover, if $u'$ is at level $1$ and the sub-tree rooted at the sibling of $u'$ is not complete then its cost is at least $M$ higher. Choosing $M$ large enough such that $M > D(2m+5)$, we have that a tree with cost at most $V$ is complete as well.
	
	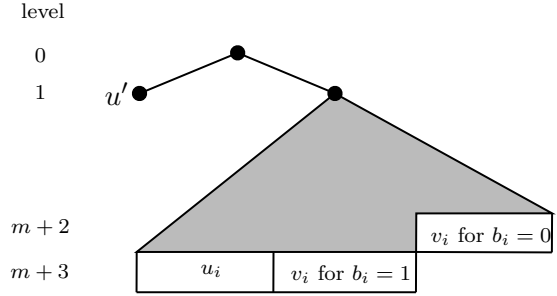
\begin{figure}[htb!]
    \centering
    \tikzset{every picture/.style={line width=0.75pt}} 
    
    \begin{tikzpicture}[x=0.75pt,y=0.75pt,yscale=-1,xscale=1]
    
    \draw  [fill={rgb, 255:red, 155; green, 155; blue, 155 }  ,fill opacity=0.68 ] (308.58,160.5) -- (240.09,160.54) -- (240,180) -- (168.08,180) -- (99.08,180) -- (199.08,100) -- cycle ;
    \draw    (100.5,100) -- (150,79.5) ;
    \draw [shift={(100.5,100)}, rotate = 337.5] [color={rgb, 255:red, 0; green, 0; blue, 0 }  ][fill={rgb, 255:red, 0; green, 0; blue, 0 }  ][line width=0.75]      (0, 0) circle [x radius= 3.35, y radius= 3.35] node[left] {$u'$}  ;
    \draw    (150,79.5) -- (199.08,100) ;
    \draw [shift={(199.08,100)}, rotate = 22.67] [color={rgb, 255:red, 0; green, 0; blue, 0 }  ][fill={rgb, 255:red, 0; green, 0; blue, 0 }  ][line width=0.75]      (0, 0) circle [x radius= 3.35, y radius= 3.35]   ;
    \draw [shift={(150,79.5)}, rotate = 22.67] [color={rgb, 255:red, 0; green, 0; blue, 0 }  ][fill={rgb, 255:red, 0; green, 0; blue, 0 }  ][line width=0.75]      (0, 0) circle [x radius= 3.35, y radius= 3.35]   ;
    \draw     ;
    \draw    (99.08,180) -- (99,200.33) ;
    \draw    (168.08,180) -- (168,200.33) ;
    \draw    (240,180) -- (239.92,200.33) ;
    \draw    (240,180) -- (308.47,180.22) ;
    \draw    (308.58,160.5) -- (308.47,180.22) ;
    \draw    (99,200.33) -- (239.92,200.33) ;
    
    \draw (39.33,53.23) node [anchor=north west][inner sep=0.75pt]  [font=\scriptsize]  {$\text{level}$};
    \draw (46.17,75.73) node [anchor=north west][inner sep=0.75pt]  [font=\scriptsize]  {$0$};
    \draw (46.17,94.4) node [anchor=north west][inner sep=0.75pt]  [font=\scriptsize]  {$1$};
    \draw (34.33,162) node [anchor=north west][inner sep=0.75pt]  [font=\scriptsize]  {$m+2$};
    \draw (34,185) node [anchor=north west][inner sep=0.75pt]  [font=\scriptsize]  {$m+3$};
    \draw (130,184.9) node [anchor=north west][inner sep=0.75pt]  [font=\scriptsize]  {$u_i$};
    \draw (175.33,185.23) node [anchor=north west][inner sep=0.75pt]  [font=\scriptsize]  {$v_{i} \ \text{for} \ b_{i} =1$};
    \draw (246,165.9) node [anchor=north west][inner sep=0.75pt]  [font=\scriptsize]  {$v_{i} \ \text{for} \ b_{i} =0$};
    \end{tikzpicture}
    \caption{Schematic view of the NP-hardness proof construction for Theorem~\ref{thm:ht.nphard}.}
    \label{fig:huffman-np-hardness}
    \end{figure}

    In addition, we can claim that there have to be exactly $n$ leaves at level $m+2$ and $2n$ leaves at level $m+3$, since we know the structure of the tree from the optimality of the complete tree. By an exchange argument, we ensure that all $u_j$ are at level $m+3$, as their frequencies are strictly less than $v_i$ in both scenarios. Let $b\in\{0,1\}^{2n}$ be a vector of indicator bits, with $b_i=1$ if $v_i$ is at level $m+3$. The total cost of such a tree in scenario \textbf{1} is 
    \[
     C_1 +zM + (m+2)2D + \sum a_i b_i,
     \]
     while for scenario \textbf{2} its cost is
     \[
     C_2 + zM - (m+2)2D - \sum a_i b_i.
     \]
     If both costs are at most $V$, then by the choice of $C_2$ they actually equal $V$. This implies that $D=\sum a_i b_i$, and $b$ is a solution to the \textsc{Equal-Cardinality Partition} problem instance.

The above proof is stated in terms of the cost-minimization measure, however it applies straightforwardly to both the competitive ratio and regret minimization, since the optimal cost in both defined scenarios is the same, namely equal to $V$.       
\end{proof}

We propose and analyze an algorithm called {\sc R-ht} for minimizing $k$-scenario regret. The idea is to aggregate the optimal trees for each scenario into a single HT. The algorithm initially computes optimal HTs for each scenario $s$, denoted as $T^s$ (line 2). For each key $i = 1, \ldots, n$, it identifies the scenario $s$ with the shortest code, denoted as $c_i^s \in \{0,1\}^*$ (line 6). Next, it prepends exactly $\lceil \log_2 k \rceil$ bits to each code $c_i^s$, which represent the scenario $s$ (lines 7 and 8). The algorithm generates the final HT by associating each key with a level equal to  $\lceil \log_2 k \rceil + c_i^s$ (line 10). In line 11, ``compactification'' refers to a process which we describe informally, for simplicity. That is,  while there is an inner node $u$ of the HT with outdegree $1$  (meaning it has a single descendant $v$) we contract the edge $u,v$.

\begin{algorithm}
    \begin{algorithmic}[1]
    \Statex \textbf{Input:} $k$ scenarios described by frequency vectors $F^1, \ldots, F^k$. 
    \Statex \textbf{Output:}  A robust Huffman tree $T$.
    \ForAll{scenarios $s$}
        \State Let $T^s$ be the Huffman tree with minimum cost for frequency vector $F^s$
    \EndFor 
    \State Let $\cal C$ be an empty set
    \ForAll{keys $i$}
        \State Let $s$ be a scenario for which key $i$ has the shortest code $c_i^s \in\{0,1\}^*$
        \State Let $x$ be the encoding of the integer $s-1$ with exactly $\lceil \log_2 k\rceil$ bits
        \State Add $x c_i^s$ to $\cal C$
    \EndFor
    \State Build the Huffman tree $T$ for the prefix free codewords $\cal C$
    \State Compactify $T$
    \State \Return{$T$}
    \end{algorithmic}
    \caption{\textsc{R-ht}.}
    \label{alg:rht}  
\end{algorithm}

\begin{theorem}
\label{thm:huffman-approx}
    Algorithm {\sc R-ht} outputs a tree $T$ with a valid level vector $L$ of regret
    at most $\lceil \log_2 k \rceil$. That is, $F^s \cdot L \leqslant \min_{L^*} F^s \cdot L^* + \lceil \log_2 k \rceil$, for every scenario $s$. 
\end{theorem}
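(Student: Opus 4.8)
The plan is to prove the bound pointwise on the levels and then aggregate via a weighted sum. First I would verify that the codeword set $\mathcal{C}$ constructed in lines 4--9 is prefix-free, so that the output $T$ (and hence the compactified level vector $L$) is genuinely \emph{valid} and the statement is well-posed. Prefix-freeness follows from a two-level argument: every codeword has the form $x\,c_i^{s_i}$, where $x$ is the length-$\lceil\log_2 k\rceil$ encoding of the scenario $s_i$ chosen for key $i$ and $c_i^{s_i}$ is a codeword of the optimal tree $T^{s_i}$. Two keys assigned to distinct scenarios receive distinct prefixes of equal length $\lceil\log_2 k\rceil$, so neither prefix (and hence neither full codeword) is a prefix of the other; two keys assigned to the same scenario share the prefix $x$, but their suffixes are already prefix-free, being distinct leaf codewords of the same optimal tree. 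This also relies on $2^{\lceil\log_2 k\rceil}\geqslant k$, which guarantees that all $k$ scenarios admit distinct length-$\lceil\log_2 k\rceil$ encodings.

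The core of the argument is a pointwise bound on each level $L_i$. Write $L^s$ for the level vector of the optimal tree $T^s$, so that $|c_i^s|=L^s_i$ and $F^s\cdot L^s=\min_{L^*}F^s\cdot L^*$. By line 6, key $i$ is assigned to a scenario $s_i$ minimizing its code length, so $|c_i^{s_i}|=\min_t L^t_i\leqslant L^s_i$ for \emph{every} scenario $s$. Before compactification, key $i$ sits at level exactly $\lceil\log_2 k\rceil+|c_i^{s_i}|$, since line 8 prepends exactly $\lceil\log_2 k\rceil$ bits. Compactification only contracts degree-$1$ internal nodes and can therefore only decrease levels, so the final level satisfies
\[
L_i \;\leqslant\; \lceil\log_2 k\rceil + |c_i^{s_i}| \;\leqslant\; \lceil\log_2 k\rceil + L^s_i
\]
simultaneously for every scenario $s$.

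It then remains to aggregate. Fixing a scenario $s$ and taking the $F^s$-weighted sum of the pointwise bound yields
\[
F^s\cdot L \;\leqslant\; \lceil\log_2 k\rceil \sum_i F^s_i + \sum_i F^s_i L^s_i \;=\; \lceil\log_2 k\rceil + F^s\cdot L^s,
\]
using the normalization $\sum_i F^s_i=1$. Since $F^s\cdot L^s=\min_{L^*}F^s\cdot L^*$ by optimality of $T^s$, this is precisely the claimed regret bound $F^s\cdot L\leqslant \min_{L^*}F^s\cdot L^*+\lceil\log_2 k\rceil$.

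I expect the main obstacle to be the prefix-freeness/validity step rather than the arithmetic: one must carefully treat the case of several keys mapping to the same scenario and confirm that the level vector stays valid after compactification. The hard part is purely structural; the aggregation is immediate once the pointwise bound and $\sum_i F^s_i=1$ are in place. Note in particular that compactification is not needed for the upper bound (it can only help, by shortening codewords), but it is what turns $T$ into a genuine Huffman tree in which every internal node has two children.
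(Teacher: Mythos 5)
Your proposal is correct and follows essentially the same route as the paper's proof: prefix-freeness of the concatenated code (equal-length scenario prefixes plus the prefix-free codes of each optimal tree $T^s$), the pointwise bound $L_i \leqslant \lceil\log_2 k\rceil + \min_t L^t_i$ preserved under compactification, and aggregation over the frequency vector. You merely spell out details the paper leaves implicit, such as the two-case prefix-freeness argument and the explicit weighted sum using $\sum_i F^s_i = 1$.
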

\begin{proof}
    First, we observe that the set of all $s$ binary encodings of the integers $0,\ldots, k-1$ form a prefix-free code. Moreover, we observe that concatenating these strings with subsets of the prefix codes for each scenario yields a code which is again prefix-free.
    
   By construction, for every key $i$, the length of its resulting code does not exceed by more than $\lceil \log_2 k \rceil$ the length of the optimal code among all scenarios. This clearly also holds after compactification, hence the theorem follows.
\end{proof}

\begin{proposition}
Algorithm {\sc R-ht} can be implemented in time $O(k n \log n)$, i.e., its runtime is dominated by the time  required to find optimal HTs for each scenario.
\label{prop:ht.implementation}
\end{proposition}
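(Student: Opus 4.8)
The plan is to charge the running time line by line against Algorithm~\ref{alg:rht} and show that the $k$ Huffman-tree computations of line~2 dominate everything else, which is exactly the content of the ``i.e.'' clause. The loop of lines 1--3 runs Huffman's algorithm $k$ times, each in $O(n\log n)$ with a priority queue~\cite{huffman1952method}, for a total of $O(kn\log n)$. As a by-product I would record, for each scenario $s$, the level $L^s_i$ of every key $i$ by a single traversal of $T^s$, costing $O(n)$ per scenario and $O(kn)$ in total.

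For the loop of lines 5--9, line~6 asks for the scenario $s(i)$ with the shortest code for key $i$, which is simply the one minimizing $L^s_i$; given the precomputed levels this is a minimum over $k$ values per key, hence $O(k)$ per key and $O(kn)$ overall. The fixed-length prefix $x$ encoding $s-1$ in $\lceil\log_2 k\rceil$ bits (line~7) is produced once per key in $O(\log k)$ time. So far everything is within $O(kn)$.

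The delicate point, and the one I would flag as the main obstacle, is the construction and compactification of $T$ (lines 10--11). Representing and concatenating the codewords $x\,c_i^{s(i)}$ as explicit bit strings and then building their trie would cost $\Theta\bigl(\sum_i L^{s(i)}_i\bigr)$, which can be $\Theta(n^2)$ whenever some optimal tree is degenerate (e.g.\ a path of depth $\Theta(n)$); this would break the claimed bound for small $k$. I would avoid materializing the long codewords altogether: build the top of $T$ as a complete binary tree on the first $\lceil\log_2 k\rceil$ levels (only $O(k)$ nodes), and under the leaf labelled $s$ graft the subtree of $T^s$ \emph{induced} by the keys assigned to scenario $s$, i.e.\ the union of their root-to-leaf paths in $T^s$. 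Since overlapping paths share prefixes inside $T^s$, this induced subtree is just a subset of the nodes of $T^s$; it is extracted by one DFS that keeps every node having an assigned-key descendant, in $O(n)$ time per scenario and $O(kn)$ in total. The resulting tree (whose validity is guaranteed by Theorem~\ref{thm:huffman-approx}) therefore has $O(kn)$ nodes, so the final compactification---contracting the degree-one nodes, including the unused slots in the top tree---also runs in $O(kn)$.

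Combining the pieces, every step other than line~2 costs $O(kn)$, while line~2 costs $O(kn\log n)$, giving a total of $O(kn\log n)$ as claimed. The only genuine subtlety to stress in the write-up is the potential quadratic blow-up from handling codewords as strings, which is circumvented by reusing the already-computed trees $T^s$ and exploiting their shared path prefixes.
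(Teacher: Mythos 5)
Your proof is correct, and it takes a genuinely different route from the paper's. The paper never rebuilds the actual trie of the codewords at all: it works purely with code \emph{lengths}, forming the vector $L_i=\lceil \log_2 k\rceil + \min_s L^s_i$ and then constructing some tree realizing this level vector greedily, level by level, assigning the keys with $L_i=\ell$ to available slots at level $\ell$; this takes time linear in the size of the resulting tree and sidesteps codeword strings entirely. You instead reconstruct the exact tree of line 10 of Algorithm~\ref{alg:rht} structurally: a complete top tree on $\lceil\log_2 k\rceil$ levels with the induced subtree of $T^s$ (union of root-to-leaf paths of the keys assigned to scenario $s$) grafted under the leaf for $s$. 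Both approaches identify and circumvent the same pitfall, which you correctly flag as the crux --- materializing the codewords as bit strings costs $\Theta\bigl(\sum_i L^{s(i)}_i\bigr)$, which is $\Theta(n^2)$ for degenerate (path-like) optimal Huffman trees. What each buys: the paper's length-vector construction is more minimalist (the trees $T^s$ can be discarded once the lengths are extracted, and the intermediate tree has only $O(n+\log k)$ nodes), but it implicitly leans on the validity of the level vector, i.e.\ the Kraft-type guarantee from Theorem~\ref{thm:huffman-approx}, for the greedy level-by-level assignment to succeed; your grafting construction produces a valid prefix-free trie by construction, so no realizability argument is needed, at the price of keeping the $k$ trees around and an intermediate tree of $O(kn)$ nodes --- still comfortably within the $O(kn\log n)$ budget, since your per-scenario DFS extraction and the final compactification (including pruning unused slots of the top tree) all run in $O(kn)$.
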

\begin{proof}
For an efficient implementation of {\sc R-ht}, one can work with vectors of code lengths. If $L^s_i$ is the length of the codeword for key $i$ in the optimal HT for scenario $s$, then before compactification, we construct a code length vector $L$ with $L_i=\lceil \log_2 k\rceil + \min_s L^s_i$ for every key $i$. Then, we build the actual HT level by level, assigning for each level $\ell$ the keys $i$ with $L_i=\ell$ to nodes of the growing tree. The construction is done in time linear in the size of the resulting tree.
\end{proof}

The following result establishes a lower bound for our problem, and shows that {\sc R-ht} is essentially best-possible. 

\begin{theorem}
There exists a set of $k$-scenarios for which no robust HT has regret smaller than $\lfloor \log_2 k \rfloor$.
\label{thm.ht.lb}
\end{theorem}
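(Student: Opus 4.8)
The plan is to exhibit an explicit family of $k$ scenarios on $n = k+1$ keys and argue, via Kraft's inequality, that every valid level vector incurs regret at least $\lfloor \log_2 k\rfloor$ in some scenario. I would take the $k$ scenarios to be point masses: for each $s \in \{1,\dots,k\}$ let $F^s$ be the frequency vector placing all weight on key $s$, and let key $k+1$ be a \emph{dummy} with frequency $0$ in every scenario. Since there are $n = k+1 \ge 3$ keys, every leaf sits at level at least $1$, so the optimal HT for scenario $s$ simply places key $s$ at level $1$ (with all remaining keys in the opposite subtree), giving $\min_{L^*} F^s\cdot L^* = 1$. Consequently, for any fixed valid level vector $L$ the regret in scenario $s$ equals exactly $L_s - 1$, and the theorem reduces to showing that some special key is forced to level at least $\lfloor\log_2 k\rfloor + 1$.

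For the core step, set $m = \lfloor \log_2 k\rfloor$ and suppose, towards a contradiction, that a valid $L$ places all special keys $1,\dots,k$ at level at most $m$. Recall that a level vector is valid exactly when it satisfies Kraft's inequality $\sum_{i=1}^{k+1} 2^{-L_i} \le 1$. On one hand, $L_s \le m$ for each special key gives $\sum_{s=1}^{k} 2^{-L_s} \ge k\, 2^{-m} \ge 2^{m}\, 2^{-m} = 1$, using $2^m \le k$. On the other hand, the dummy key is itself a leaf at some finite level $L_{k+1}\ge 1$, so $\sum_{s=1}^{k} 2^{-L_s} \le 1 - 2^{-L_{k+1}} < 1$, a contradiction. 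Hence at least one special key $s^{*}$ satisfies $L_{s^{*}} \ge m+1$, and its regret is $L_{s^{*}} - 1 \ge m = \lfloor\log_2 k\rfloor$; taking the maximum over scenarios proves the bound for every valid $L$.

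I expect the main obstacle to be the case in which $k$ is a power of $2$. A plain point-mass instance on $n = k$ keys does \emph{not} suffice there: the complete binary tree places all $k = 2^m$ keys at level $m$, achieving regret only $m-1 = \lfloor\log_2 k\rfloor - 1$. The single dummy key is precisely what rules this out, since it forces the Kraft sum of the special keys to be strictly below $1$, which is incompatible with all of them lying at level at most $m$ when $k = 2^m$. For $k$ not a power of $2$ the inequality $k\, 2^{-m} > 1$ already yields the contradiction even without the dummy, so the uniform statement follows by always including it. The only points needing care are the elementary verification that $\min_{L^*}F^s\cdot L^* = 1$ and the correct use of the validity (Kraft) characterization of level vectors for arbitrary, not necessarily full, trees.
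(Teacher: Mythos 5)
Your proof is correct, and it uses exactly the same adversarial instance as the paper: $k$ point-mass scenarios over $n=k+1$ keys, including a zero-frequency dummy key, with optimal cost $1$ in every scenario, so that the regret in scenario $s$ is $L_s-1$. Where you diverge is in the key step forcing some special key to level at least $\lfloor \log_2 k\rfloor+1$. The paper argues structurally: every valid level vector on $n$ keys has a key at level at least $\lceil \log_2 n\rceil$, and since inner nodes of a Huffman tree have out-degree $2$, the deepest level contains at least \emph{two} leaves, one of which must be a non-dummy key; the bound $\lceil \log_2 n\rceil - 1 = \lfloor \log_2 k\rfloor$ then follows from an explicit case analysis on whether $k$ is a power of two. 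You instead invoke Kraft's inequality: if all $k$ special keys sat at level at most $m=\lfloor \log_2 k\rfloor$, their Kraft mass would be at least $k\,2^{-m}\geq 1$, while the dummy leaf's strictly positive mass forces the special keys' total mass strictly below $1$. This buys you two things. First, the power-of-two dichotomy disappears, being absorbed into whether $k\,2^{-m}\geq 1$ is tight or strict (your closing paragraph makes this explicit, and correctly identifies that the dummy key is needed precisely in the tight case). Second, you never need the tree to be full: the paper's appeal to out-degree-$2$ inner nodes implicitly assumes a compactified tree (or a monotonicity argument that non-full trees are dominated), whereas Kraft's inequality is exactly equivalent to the paper's prefix-code definition of a valid level vector, so your argument applies verbatim to every valid $L$. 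Both arguments are sound; yours is somewhat more self-contained given how the paper defines validity.
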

\begin{proof}
Consider $k$ scenarios defined by frequency vectors $F^1,\ldots,F^k$ on $n=k+1$ keys, with $F^s_i=1$ if $s=i\in\{1,\ldots,n\}$ and $F^s_i=0$ otherwise. Note that $k+1$ is a dummy key which has zero probabilities in every scenario, but will be useful in the construction. Clearly, we have that $\min_{L^*} F^s \cdot L^*=1$ for every scenario $s$. By a simple counting argument, every valid level vector on $n$ keys must assign level at least $\lceil \log_2 n \rceil$ to at least one key. Moreover, since inner nodes have out-degree 2, there must exist at least two keys at level at least $\lceil \log_2 n \rceil$. Let $s$ be one of those keys, with the property $s\in\{1,\ldots,k\}$, and with a slight abuse of notation, let $s$ also denote the corresponding scenario (recall that each scenario has support 1 in the given construction).  It follows that the cost of the tree for scenario $s$ is at least $\lceil \log_2 n \rceil$, and thus the difference from the optimal cost of 1 for scenario $s$ is at least  $\lceil \log_2 n \rceil - 1$. If $k$ is a power of 2 we have 
\[
	\lceil \log_2 n \rceil - 1 = \log_2 k + 1 - 1 = \lfloor \log_2 k \rfloor,
\]
  whereas if $k$ is not a power of two we also have
\[
	\lceil \log_2 n \rceil - 1 = \lceil \log_2 k \rceil - 1 = \lfloor \log_2 k \rfloor,
\]
which completes the proof.
\end{proof}

We conclude this section by noting that the guarantees we established on the regret of HTs essentially carry over to the competitive analysis of the trees. 

\begin{corollary}
There exists an algorithm for $k$-scenario robust HTs of competitive ratio at most $\lceil \log k \rceil+1$. Furthermore, no algorithm for this problem can have competitive ratio better than $\lfloor \log k \rfloor +1$.
\label{cor:ht.comp}
\end{corollary}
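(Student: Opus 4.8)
The plan is to derive both statements directly from the regret results already established, exploiting the fact that the optimal cost of a Huffman tree in any single scenario is bounded below by~$1$. The competitive ratio of a fixed tree $L$ is invariant when each frequency vector $F^s$ is scaled independently, since both $\textrm{cost}(L,F^s)$ and $\min_{L^*}\textrm{cost}(L^*,F^s)$ scale by the same factor; moreover the output of {\sc R-ht} depends only on the (scale-invariant) optimal Huffman trees $T^s$. Hence I may assume without loss of generality that every $F^s$ is a probability distribution, i.e., $\sum_i F^s_i = 1$. Under this normalization the key observation is that $\min_{L^*} F^s \cdot L^* \geqslant 1$: in any Huffman tree on $n \geqslant 2$ keys the root is an internal node, so every leaf, and in particular every key of positive frequency, sits at level at least $1$, whence $\sum_i F^s_i L^*_i \geqslant \sum_i F^s_i = 1$.

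For the upper bound I would run {\sc R-ht} and invoke Theorem~\ref{thm:huffman-approx}, which guarantees $F^s \cdot L \leqslant \min_{L^*} F^s \cdot L^* + \lceil \log_2 k\rceil$ for every scenario $s$. Dividing by the per-scenario optimum and using $\min_{L^*} F^s \cdot L^* \geqslant 1$ yields
\[
\frac{F^s\cdot L}{\min_{L^*} F^s\cdot L^*} \;\leqslant\; 1 + \frac{\lceil \log_2 k\rceil}{\min_{L^*} F^s\cdot L^*} \;\leqslant\; 1 + \lceil \log_2 k\rceil,
\]
and taking the maximum over $s$ gives competitive ratio at most $\lceil \log k\rceil + 1$. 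For the lower bound I would reuse verbatim the $k$-scenario family from the proof of Theorem~\ref{thm.ht.lb}, in which each scenario has a single key of frequency $1$, so that $\min_{L^*} F^s\cdot L^* = 1$ for all $s$. The counting argument there shows that any valid level vector places some key $s\in\{1,\dots,k\}$ at level at least $\lceil \log_2 n\rceil$, hence $\textrm{cost}(L,F^s) \geqslant \lceil \log_2 n\rceil$ for the corresponding scenario; since the optimum for that scenario is $1$, the competitive ratio is at least $\lceil \log_2 n\rceil = \lfloor \log_2 k\rfloor + 1$, the last equality being exactly the case analysis carried out in that proof.

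The only genuinely load-bearing step is the lower bound $\min_{L^*} F^s\cdot L^* \geqslant 1$, together with the justification that passing to normalized frequencies is legitimate; everything else is arithmetic. The point to watch is that the regret guarantee of Theorem~\ref{thm:huffman-approx} is an \emph{additive} bound whose constant $\lceil \log_2 k\rceil$ is only meaningful against an optimum that is itself $\Omega(1)$, which is precisely what the normalization and the depth-at-least-$1$ observation secure. I would finally check the degenerate edge cases (e.g.\ $n=1$, or a scenario whose support is a single key) to confirm that either $\min_{L^*} F^s\cdot L^* \geqslant 1$ still holds or the instance is trivial, so that the ratio bounds remain valid throughout.
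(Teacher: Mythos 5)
Your proposal is correct and follows essentially the same route as the paper: the upper bound combines the regret guarantee of Theorem~\ref{thm:huffman-approx} with the observation that the per-scenario optimum is at least $1$, and the lower bound reuses the adversarial construction of Theorem~\ref{thm.ht.lb} with ratios in place of differences. Your explicit justification of the normalization (scale-invariance of the ratio and of {\sc R-ht}'s output) and of the bound $\min_{L^*} F^s\cdot L^*\geqslant 1$ is a welcome bit of extra care that the paper leaves implicit.
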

\begin{proof}
To establish a competitive ratio of $\lceil \log_2 k \rceil + 1$ 
for {\sc R-ht}, we can directly apply the proof of Theorem~\ref{thm:huffman-approx}, along with the fact that the cost of the optimal HT for any scenario $s$ is at least 1. The lower bound follows by the same adversarial construction as in the proof of  Theorem~\ref{thm.ht.lb}, replacing the differences by the appropriate ratios at the conclusion of the proof.
\end{proof}
\section{Regret and Fairness in Binary Search Trees}
\label{sec:regret.bst}

In this section, we introduce the first study of fairness in BSTs, and demonstrate its connection to regret minimization in a {\em multiobjective} optimization setting. We begin with a motivating application. Consider a company that stores client information in a database structured as a BST. The clients are from either Spain or France, and the company would like to use a single database to store client data, instead of two, for simplicity of maintenance. Moreover, the company would like to treat customers for the two countries in a fair manner. Namely, the average cost for accessing clients from France should be comparable to that of accessing clients from Spain. 

We can formulate applications such as the above using a scenario-based regret-minimization framework over BSTs. Specifically, we can model the setting using two frequency vectors, one for each country. Each vector stores the probability of accessing a client, i.e., entry $i$ is the probability of accessing client $i$. Since we treat all clients of a country equally,  we are interested in tradeoffs between the average access costs of clients in the two countries, and can thus assume that the access probabilities are uniform. That is, if $f$ denotes the number of French clients, then each such client has access probability $1/f$ in the frequency vector. In the application we discuss, a client can belong to one of two countries only. This is not a required assumption, but one we can make without loss of generality. See Proposition~\ref{prop:wlog}.

\begin{proposition}
A worst case instance for the robust BST problem satisfies that every key $i$ is dedicated to a specific scenario, in the sense that it has a positive probability for at most one scenario.
\label{prop:wlog}
\end{proposition}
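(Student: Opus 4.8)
The plan is to show that an arbitrary instance can be converted, without decreasing the optimal robust objective, into one in which every key is dedicated; hence the hardest (worst-case) instances may be taken to be dedicated. Given an instance $I$ with a key $i$ that has positive probability in a set $P$ of scenarios, I would \emph{split} $i$ into $|P|$ \emph{consecutive} keys $\{i^{(s)}\}_{s\in P}$, which occupy the single slot of $i$ in the key order, where $i^{(s)}$ inherits the frequency $F^s_i$ in scenario $s$ and $0$ in every other scenario. Call the resulting instance $I'$. Applying this to every shared key yields a dedicated instance, so it suffices to prove that one split step cannot decrease the optimal robust value, i.e.\ $\mathrm{OPT}(I')\geqslant \mathrm{OPT}(I)$, where $\mathrm{OPT}$ denotes the minimum over level vectors of the worst-scenario regret. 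The same argument will apply verbatim to the competitive ratio and to the worst-case cost, since all the comparisons below are monotone.

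First I would control the per-scenario optima, claiming that $\min_{L^*}\textrm{cost}(L^*,F^s)$ does not increase under splitting. Starting from an optimal tree for $I$ in scenario $s$, in which key $i$ sits at some level $\lambda$ with left and right subtrees $T_L,T_R$, I place $i^{(s)}$ at the same level $\lambda$ and hang the remaining, zero-frequency copies of $i$ as descending paths off the bottom of the right spine of $T_L$ and the left spine of $T_R$. Since these copies are all larger than every key of $T_L$ and smaller than every key of $T_R$, this respects the search order, keeps every positively-weighted key at its original level, and assigns the extra copies zero cost. Hence $\min_{L^*}\textrm{cost}(L^*,{F'}^{s})\leqslant \min_{L^*}\textrm{cost}(L^*,F^s)$ for every $s$.

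Next, given \emph{any} level vector $L'$ for $I'$, I would build a level vector $L$ for $I$ of no larger cost in every scenario. For each split key I set $L_i:=\min_{s\in P} L'_{i^{(s)}}=:\ell_{\min}$, keeping $L_j:=L'_j$ for the remaining keys. Because only $i^{(s)}$ carries weight in scenario $s$ and $\ell_{\min}\leqslant L'_{i^{(s)}}$, the cost formula~\eqref{eq:vector-def} gives $\textrm{cost}(L,F^s)\leqslant \textrm{cost}(L',{F'}^{s})$ for every $s$. Combined with the previous paragraph (the optimum for $I$ is at least that for $I'$), the regret of $L$ in each scenario is at most the regret of $L'$, since a smaller cost minus a larger optimum cannot exceed a larger cost minus a smaller optimum. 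Thus the worst-scenario regret of $L$ is at most that of $L'$; taking $L'$ optimal for $I'$ yields $\mathrm{OPT}(I)\leqslant \mathrm{OPT}(I')$, as required.

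The main obstacle is verifying that the merged vector $L$ is a genuine level vector in the sense of Definition~\ref{level_definition}. The copies $\{i^{(s)}\}$ need not form a connected subtree of $L'$, so I cannot simply ``collapse a subtree''; instead I would check the defining condition directly. For a pair $p<q$ with $L_p=L_q$, if neither is the merged key, the witness between them in $L'$ survives (and if that witness was a copy, the merged key $i$, whose level $\ell_{\min}$ is even smaller, serves in its place); if $p$ (or $q$) is the merged key at level $\ell_{\min}$, then any same-level partner forces, by validity of $L'$ applied to the copy attaining $\ell_{\min}$, a strictly-lower-level key strictly between them. That key cannot be another copy (all copies have level $\geqslant \ell_{\min}$), hence is an original key furnishing the required witness in $I$. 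This case analysis, rather than any geometric subtree surgery, is the crux; the cost inequalities are then immediate.
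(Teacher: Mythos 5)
Your proof is correct, and its high-level plan is the same as the paper's: split every shared key into consecutive dedicated copies, then show that the split (dedicated) instance is at least as hard, by mapping any solution of $I'$ back to a solution of $I$ whose cost in every scenario is no larger, while the per-scenario optima do not increase. Where you genuinely diverge is in how the merging step is executed. The paper argues on actual trees, and only for $k=2$: taking $T'$ optimal for $I'$ with the level of $i'$ at most that of $i''$, it observes that $i''$ can have no left descendant, so $i''$ can be spliced out of $T'$ (its descendants even move up one level) and $i'$ renamed to $i$; realizability of the merged tree is then automatic because the surgery produces a bona fide BST. You instead work entirely inside the level-vector formalism of Definition~\ref{level_definition}: you merge by giving the merged key the minimum level among its copies, keep all other levels, and verify the spacing condition by a case analysis on where the witness lies. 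This buys generality---your argument covers an arbitrary number of scenarios and an arbitrary set $P$, matching the statement of Proposition~\ref{prop:wlog}, whereas the paper's proof is written for two scenarios---and it also makes the per-scenario-optimum step more explicit (hanging zero-frequency copies as zero-cost paths, versus the paper's terser remark that $i''$ ``must be a leaf by optimality''). One technicality you should patch: Definition~\ref{level_definition} also requires $L\in\{1,\ldots,n\}^n$, and since $I'$ has more keys than $I$, your merged vector can inherit entries exceeding $n$. This is harmless: in any vector satisfying the spacing condition, the minimum-level key of each key interval is unique, so rooting each interval at that key and recursing yields a genuine BST level vector that is componentwise at most your $L$, and substituting it only strengthens every inequality in your chain.
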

\begin{proof}
Let $I$ be an instance with 2 scenarios. Let $i$ be a key in $I$. 
Let $p^0, p^1$ be the distributions for the respective scenarios $\textbf 0$ and $\textbf 1$.

Let $I'$ be an instance obtained from $I$ by replacing key $i$ by new keys $i', i''$ and with the following probabilities
          
\[
\begin{array}{l|ll}
           & i'     &i''  \\ \hline
\textrm{scenario\:} \textbf{0} & p^0_i  &0 \\
\textrm{scenario\:} \textbf{1} & 0      &p^1_i
\end{array}
\]

Observe that in the optimal tree for scenario \textbf{0} in $I'$, the key $i''$ must be a leaf by optimality. Hence, it has the same cost as the optimal tree for scenario \textbf{0} in I. The same observation holds for scenario \textbf{1}.  

Let $T'$ be a BST which minimizes the competitive ratio for $I'$.
Suppose that the level of $i'$ is at most the level of $i''$ in $T'$. 
The proof for the other case is symmetric. 

Observe that $i''$ has no left descendant in $T'$. The reason is that this left descendant needs to be $i'$ or its ancestor by the key ordering. But $i'$ is at the same or lower level than $i''$.

Now transform $T'$ into a tree $T$ for instance $I$ as follows. Replace $i'$ by $i$. Remove $i''$ if it is a leaf, otherwise replace $i''$ by its unique descendant. The level of each key different from $i, i', i''$ is decreased by 1 for the descendants of $i''$ in $T'$ and preserved otherwise. In scenario \textbf{0} of instance $I'$, the keys $i'$ and $i''$ contribute with the value which is at most the value with which key $i$ contributes in scenario \textbf{0} of instance $I$. This is due by the level ordering assumption between $i'$ and $i''$. The same observation holds for scenario \textbf{1}. 

As a result in each scenario, the cost of $T'$ is at least the cost of $T$, which is at least the optimal cost. It follows that the competitive ratio for $I'$ is at least the competitive ratio for $I$. Hence the worst case instance for the competitive ratio satisfies without loss of generality that each key has a positive probability for exactly one scenario. The same observation holds for the cost minimization and the regret minimzation variants.
\end{proof}

There are two important observations to make. First, unlike the approaches of Sections~\ref{sec:bst} and~\ref{sec:huff}, the fairness setting is inherently multi-objective. For example, in the above application, we are interested in the tradeoff between the total access costs of clients in the two countries. We will thus rely on the well-known concept of {\em Pareto-optimality}~\cite{boyd2004convex} that allows us to quantify such tradeoffs, as we will discuss shortly. Second, we will use the case of two scenarios for simplicity, however we emphasize that the setting and our results generalize to multiple scenarios, as we discuss at the end of the section.

We formalize our setting as follows. We are given two scenarios {\bf 0} and {\bf 1} over $n$ keys $1, \ldots ,n$. We denote by $a$ and $b$ the number of keys in scenario $0$, and $1$, respectively.
We refer to keys of scenario 0 as the {\em $0$-keys}, and similarly for $1$-keys. We can describe compactly the two scenarios using a {\em binary string} $s\in\{0,1\}^n$, which specifies that key $i$ belongs to scenario $s_i \in\{0,1\}$. 
Consider a BST $T$ for this set of $n$ keys, then the cost of key $i\in[n]$ is the level of the node in $T$ that contains $i$.  The {\em $0$-cost} of $T$ is defined as the total cost of all $0$-keys in $T$, and the $1$-cost is defined similarly. 

To define the concept of regret, let 
$\textrm{OPT}(m)$ denote the optimal cost of a binary tree over $m$ keys, assuming a uniform key distribution. From~\cite[Sect 5.3.1, Eq. (3)]{knuth1997art}, 
\begin{equation}
\text{OPT}(m) = (m+1)\lceil\log_2(m+1)\rceil - 2^{\lceil\log_2(m+1)\rceil} + 1.
\label{eq:opt.bound}
\end{equation}

Clearly, in every BST $T$, the $0$-cost is at least $\textrm{OPT}(a)$. We refer to the difference between the $0$-cost of $T$ and the  quantity $\textrm{OPT}(a)$ as the {\em $0$-regret} of $T$. Thus, the $0$-regret captures the additional cost incurred for searching $0$-keys, due to the presence of $1$-keys in $T$ ($1$-regret is defined along the same lines). This notion allows us to establish formally the concept of fairness in a BST:

\begin{definition} \label{def:fairness}
A BST for a string $s$ is $(\alpha,\beta)$-fair if it has $0$-regret $\alpha$ and $1$-regret $\beta$. We call $(\alpha,\beta)$ the {\em regret point} of the BST. We denote by $f(s,\alpha)$
the function that determines the smallest $\beta$ such that there is a BST for $s$ which is $(\alpha,\beta)$-fair.
\end{definition}

\begin{figure}
    \centering
    \includegraphics[width=0.5\linewidth]{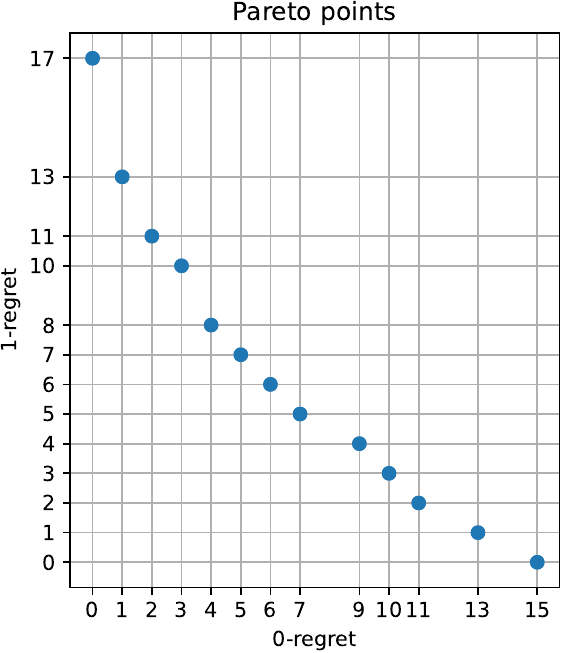}
    \caption{The Pareto optimal regret points for the string $1011011001111000$.}
    \label{fig:f}
\end{figure}

We say that a tree $T$ dominates a tree $T'$ if the regret point $(\alpha,\beta)$ of $T$ dominates the regret point $(\alpha', \beta')$ of $T'$ in the sense that $\alpha\leq \alpha'$ and $\beta\leq \beta'$ with one of the inequalities being strict.  The Pareto-front is comprised by the regret points of all undominated trees. This is illustrated in Figure~\ref{fig:f}.

\subsection{Computing the Pareto Frontier}
\label{subsec:fair.pareto}

We now give an algorithm for computing the Pareto frontier. More precisely, we describe how to compute the function $f(s,\alpha)$. Note that $f$ is non-increasing in $\alpha$. The Pareto front is obtained by calling $f$ for all $\alpha=0,1,\ldots$, until $f(s,\alpha)=0$. We first need to bound the range of $\alpha$.

\begin{lemma}
\label{lemma:smallest.alpha}
Let $\alpha^*$ denote the smallest integer such that  $f(s,\alpha^*)=0$. It holds 
that $\alpha^* \leq a\lfloor \log_2 (b+2)\rfloor$.
\end{lemma}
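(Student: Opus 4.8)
The plan is to first characterise exactly what a tree of $1$-regret $0$ looks like, and only then optimise the placement of the $0$-keys inside it. Throughout write $h_1 := \lceil \log_2(b+1)\rceil$ (the height of an optimal BST on the $b$ ones) and $h := \lfloor \log_2(b+2)\rfloor$, and note that a quick check gives $h_1 \in \{h, h+1\}$.

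\emph{Structural step.} I would first show that every BST of $1$-regret $0$ is obtained by fixing an optimal BST $\mathcal{O}$ on the $b$ ones and hanging each $0$-key in one of the gaps (external positions) of $\mathcal{O}$. Indeed, the $1$-cost of a tree equals the sum of the levels of the ones, and the level of a one is at least its level in the subtree $\mathcal{O}$ induced on the ones by the ancestor relation; since $\mathcal{O}$ is itself a valid BST on the ones, its cost is at least $\textrm{OPT}(b)$. Equality (that is, $1$-regret $0$) therefore forces $\mathcal{O}$ to be optimal \emph{and} forces no $0$-key to be an ancestor of any $1$-key. Hence every $0$-key lies below the ones, and since two consecutive ones have a unique null pointer between them, the $t_j$ keys falling in gap $j$ form a single subtree hung at that null pointer. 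Arranging each block as a balanced tree, the $0$-cost becomes $\sum_{j=0}^{b}\big(t_j \ell_j + \textrm{OPT}(t_j)\big)$, where the $t_j$ are fixed by $s$ and $\ell_j$ is the level of the deeper of the two ones bounding gap $j$ in $\mathcal{O}$. Consequently $\alpha^* = \min_{\mathcal{O}} \sum_j \big(t_j \ell_j + \textrm{OPT}(t_j)\big) - \textrm{OPT}(a)$, the minimum ranging over optimal one-trees $\mathcal{O}$.

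\emph{Easy case.} If $h_1 = h$, then every optimal $\mathcal{O}$ has height $h$, so $\ell_j \le h$ for all $j$ and $\sum_j t_j \ell_j \le h \sum_j t_j = ha$. Since $\textrm{OPT}$ is convex (its successive differences $\lceil\log_2(m+1)\rceil$ are non-decreasing) and $\textrm{OPT}(0)=0$, it is superadditive, so $\sum_j \textrm{OPT}(t_j) \le \textrm{OPT}(a)$; combining gives $\alpha^* \le ha$, as claimed.

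\emph{Hard case and the main obstacle.} The remaining case is $h_1 = h+1$, where an optimal $\mathcal{O}$ is a complete tree of height $h+1$ whose external slots sit at levels $h$ (shallow) and $h+1$ (deep), the $2p$ deep slots arising as the two gaps surrounding each of the $p := b - 2^{h}+1$ bottom-level ones. Using $\ell_j \in \{h,h+1\}$ together with superadditivity, one gets $\alpha^* \le ha + \sum_{j \in D} t_j - \big(\textrm{OPT}(a) - \sum_j \textrm{OPT}(t_j)\big)$, where $D$ is the set of deep gaps. It thus suffices to choose $\mathcal{O}$ — equivalently, to choose which ones occupy the bottom level, and hence which pairs of gaps become deep — so that $\sum_{j\in D} t_j \le \textrm{OPT}(a) - \sum_j \textrm{OPT}(t_j)$. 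I expect this inequality to be the crux of the whole argument: it plays the number of $0$-keys forced into deep slots against the superadditive ``merging gain'' of $\textrm{OPT}$. When only a few gaps are nonempty they can be routed onto shallow slots by placing the bottom-level ones away from the heavy gaps, keeping the left-hand side small; when many gaps are nonempty the right-hand side is large precisely because $\textrm{OPT}$ is strictly convex. The delicate point will be making these two regimes meet quantitatively while respecting the combinatorial constraint that deep slots can only be created in adjacent pairs, which limits how freely the heavy gaps may be kept shallow.
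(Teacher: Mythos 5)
Your structural reduction is correct and in fact more explicit than what the paper records: zero $1$-regret does force an optimal BST $\mathcal{O}$ on the ones with every $0$-key hanging in a gap of $\mathcal{O}$, and your identity $\alpha^* = \min_{\mathcal{O}} \sum_j \bigl(t_j\ell_j + \mathrm{OPT}(t_j)\bigr) - \mathrm{OPT}(a)$, combined with superadditivity of $\mathrm{OPT}$, cleanly settles your ``easy case'' $b \in \{2^j-2,\,2^j-1\}$. One repairable flaw in the structural step: the ancestor relation restricted to the ones need not be a binary tree (a one can have several maximal one-descendants separated by zeros, and if the root is a zero you even get a forest), so ``the induced structure has cost at least $\mathrm{OPT}(b)$'' is not immediate as stated; argue instead by deleting the zeros one at a time with standard BST deletion, which never increases any level and strictly decreases the level of some one whenever a zero has a one below it.

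The genuine gap is that your ``hard case'' $h_1 = h+1$ is not proved, and it is the \emph{generic} case: it covers every $b$ not of the form $2^j-1$ or $2^j-2$. You reduce it to choosing the bottom-level ones (an independent set of size $p$ in the path $v_1,\dots,v_b$) so that $\sum_{j\in D} t_j \le \mathrm{OPT}(a) - \sum_j \mathrm{OPT}(t_j)$, and you stop there, so the lemma remains unproven. Moreover this inequality is genuinely tight, so no crude estimate will close it: for $b = 2^j-3$ and the string $1\,0\,1\,1\,0\,1^{b-3}$ (singleton gaps $g_1$ and $g_3$), one has $p = 2^{j-1}-2$ and every admissible bottom set must contain one of $v_1,\dots,v_4$, so the left side is at least $1$, while the right side equals $\mathrm{OPT}(2)-2\,\mathrm{OPT}(1) = 1$; three clustered singleton gaps similarly give $2$ against $\mathrm{OPT}(3)-3=2$. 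For comparison, the paper sidesteps per-string analysis: it asserts that a worst-case string is $1\,0^a\,1^{b-1}$, i.e.\ all zeros in a single gap, where your right-hand side vanishes and the only thing to verify is that the left side can be made $0$ --- equivalently, that $p \le \lceil (b-2)/2\rceil$, so an optimal tree on the ones exists whose bottom level avoids both ones bounding the unique nonempty gap; the level of that attachment point is exactly $\lfloor\log_2(b+2)\rfloor$. So your crux inequality and the paper's (asserted, not proved) claim that a single gap is the worst case are the same missing statement in two guises; finishing your route would require, for instance, showing that the superadditivity gain is at least the total size of all nonempty gaps except the largest, that the forced deficit of the independent set is at most the number of nonempty gaps minus one, and that this deficit can always be charged to \emph{distinct}, smallest, nonempty gaps.
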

\begin{proof}
Any BST with zero $1$-regret consists of a complete tree for the scenario $1$ keys. Such a tree has depth $d=\lfloor \log_2 (b+1)\rfloor$, where $b$ is the number of 1s in $s$.
There are three types of nodes in this tree, depending on their out-degree: leaves, nodes with out-degree 1, and full nodes.
All nodes strictly below level $d-1$ are full. All nodes at level $d$ are leaves. At level $d-1$, there may be nodes with out-degree 1, moreover, there may be leaves at level $d-1$ if and only if $b\leq 2^d-3$.  Hence, a worst-case string $s$ is $10^a1^{b-1}$, for which the $0$-keys need to form a complete tree, and which is attached to the right of the left-most $1$-key. This $1$-key can be at level $d-1$ only if $b\leq 2^d-3$. As a result, $\alpha^*=a\lfloor \log_2 (b+2)\rfloor$.  Figure~\ref{fig:alpha-star} illustrates this situation.
\end{proof}

\begin{figure}[htb!]
    \centering
\begin{tikzpicture}[scale=1]
\foreach \rank/\x/\y in {
    1/12/0,
    2/8/1,
    3/16/1,
    4/0/2,
    5/10/2,
    6/14/2,
    7/18/2,
    9/4/3,
    10/9/3,
    11/11/3,
    12/13/3,
    13/15/3,
    14/17/3,
    15/19/3,
    18/2/4,
    19/6/4,
    36/1/5,
    37/3/5,
    38/5/5,
    39/7/5
    } 
    {
    \ifthenelse{\rank > 16 \OR \rank = 9} 
    {
        \draw (\x * 0.6, -\y) node (v\rank) {\textcolor{red}0};
    }
    {
        \draw (\x * 0.6, -\y) node (v\rank) {\textcolor{blue}1};
    };
    \ifthenelse{\rank > 1}
    {
        \pgfmathtruncatemacro{\ancestor}{\rank / 2};
        \draw (v\ancestor) -- (v\rank);
    }{};
    };
    \draw (-1, 0) node {level};
    \draw (-1, -2) node {$d-1$};
    \draw (-1, -3) node {$d$};
\end{tikzpicture}
    \caption{An illustration of the situation in the proof of Lemma~\ref{lemma:smallest.alpha}, for deriving $\alpha^*$.}
    \label{fig:alpha-star}
\end{figure}

Central to the computation of $f$ is the notion of \emph{loss}. For a given BST, we associate with each node, and for each scenario $c\in\{0,1\}$, a $c$-loss, such that the $c$-regret equals the total $c$-loss over all nodes. Informally, the $c$-loss at a node $r$ is the increase of the $c$-cost due to the choice of $r$ as the root of its subtree. Formally, we map $m_1,m_2\in\mathbb N$ and $m_0\in\{0,1\}$ to
\[
\textrm{loss}(m_1, m_0, m_2) = m_1+m_0+m_2 +\text{OPT}(m_1) + \text{OPT}(m_2) -\text{OPT}(m_1+m_0+m_2),
\]
and recall that $\text{OPT}$ is given by~\eqref{eq:opt.bound}.


The interpretation of this definition is the following. Consider a BST for a string $s\in\{0,1\}^n$ with node $r$ at its root. Let $m_1$ be the number of $c$ in the left sub-string $s[1:r-1]$, $m_2$ the number of $c$ in the right sub-string $[r+1:n]$, and $m_0$ the characteristic bit indicating whether  $s_r$ equals $c$. Then, if in both sub-trees scenario $c$ has zero regret, then its overall cost is exactly $\textrm{loss}(m_1,m_0,m_2)$.

We show how to compute the function $f$ by dynamic programming. While one could use the approach of~\cite{GIEGERICH2004215}, which computes the Pareto-optimal regret points for all BSTs for all sub-strings of $s$, we propose a somewhat different approach that has the same time complexity and is easier to implement.

The empty string $s=\varepsilon$ constitutes the base case for which we have $f(\varepsilon,\alpha) = 0$.
For $s \neq \varepsilon$ of length $n\geq 1$ we have
\[\begin{aligned}
    f(s,\alpha) = \min_{r} 
    \min_{\alpha_1, \alpha_2} 
    ( f(s[1,r-1],\alpha_1) +
    \textrm{loss}(b_1,b_0,b_2)+f(s[r+1,n],\alpha_2)
    ),
   \end{aligned}
\]
where the root $r$ in the outer minimization ranges in $[1,n]$ and separates the string to a left sub-string $s[1:r-1]$, a root $s_r$, and a right sub-string $s[r+1,n]$. For a fixed $r$, the value $a_1$ is the number of 0s in the left sub-string, whereas $a_2$ is the number of 0s in the right sub-string, $a_0$ is the indicator bit for $s_r=0$, and $b_1,b_0,b_2$ are similarly defined for scenario \textbf{1}. The inner minimization optimizes over all partitions $\alpha_1,\alpha_2$ of the allowed bound on the $0$-regret for the left and right sub-trees, such that $\alpha = \alpha_1 + \textrm{loss}(a_1,a_0,a_2) + \alpha_2$.

The correctness of the algorithm follows from the fact that any BST for $s$ with $0$-regret at most $\alpha$ is defined by a root and a partition of the remaining $0$-regret $\alpha-\textrm{loss}(a_1,a_0,a_2)$, and is composed recursively by a left and right sub-tree. 

For the running time of the algorithm, observe that there are $O(n^2)$ different sub-strings of $s$ and $O(n^2 \alpha^*)$  possible parameters for $f$, by Lemma~\ref{lemma:smallest.alpha}. Each is the minimization over $O(n)$ root candidates and $O(\alpha^*)$ regret bound partitions. The  overall time complexity is thus $O(n^3 (\alpha^*)^2)$, which simplifies to $O(n^5 \log^2 n)$ by Lemma~\ref{lemma:smallest.alpha}.

We emphasize that the dynamic programming approach can be generalized to $k\geq 2$ scenarios. In this general setting, a regret vector has dimension $k$, and in the function $f$ we fix $k-1$ dimensions and optimize the last one. Hence, we have $O(n^2 (n \log n)^{k-1}) = O(n^{1+k} \log^{k-1} n)$ variables, each being a minimization over $O(n (n\log n)^{k-1}) $ choices, which yields a time complexity of $O(n^{1+2k} \log^{2k-2} n)$.
\section{Computational Experiments} 
\label{sec:experiments}

\subsection{Robust BSTs and HTs}
\label{subsec:exp.first}

We report computational experiments on the robust versions of BSTs and HTs from Sections~\ref{sec:bst} and~\ref{sec:huff}. We used open data from~\cite{wiki:Letter_frequency}. 
Specifically, we chose ten European languages\footnote{
Danish, 
Dutch, 
English, 
Finnish,
French, 
German, 
Italian, 
Portuguese, 
Spanish and 
Swedish.} as corresponding to ten different scenarios, based on the frequency of each letter in the corresponding language. 
We restrict to the English alphabet of  26 letters, ignoring other letters or accents for simplification, but normalizing the frequencies to $1$. 
For example, the most frequent letter in English is \texttt{e} with 12.7\%, whereas in Portuguese the letter \texttt{a} is used more often with frequency 14.6\%. 

\begin{table}[ht!]
    \begin{center}
        \begin{tabular}{r|c|c|c|c|c|c|}
            \multicolumn{1}{l}{}& \multicolumn{2}{c}{Binary Search Tree} & \multicolumn{2}{c}{Huffman Tree} \\
            \multicolumn{1}{l}{} & \multicolumn{1}{c}{Optimal} & \multicolumn{1}{c}{{\sc R-bst}} & \multicolumn{1}{c}{Optimal} & \multicolumn{1}{c}{{\sc R-ht}}  \\\cline{2-5}
            cost   & 3.389 & 3.940 & 4.271 & 4.425  \\ \cline{2-5}
            competitive ratio & 1.047 & 1.215 & 1.038 & 1.091  \\ \cline{2-5}
            regret& 0.151 & 0.680 & 0.155 & 0.364  \\ \cline{2-5}
        \end{tabular}
    \end{center}
    \caption{Performance comparison of the various algorithms.}
    \label{tab:comparison}
\normalsize
\end{table}

To evaluate our algorithms, we provide a mixed integer linear program (MILP) formulation for the problems. This allows us to compute optimal trees with the help of commercial MILP solvers, such as \textsc{Gurobi}.
We give the MILP for cost-minimization in robust BSTs, but we note that minimizing the competitive ratio and the regret follow along the same lines, by only changing the objective function accordingly. The range of indices is $i,j,\ell\in[1,\ldots,n]$.
\begin{align}
    \textrm{minimize} \quad & C  \label{LP:obj}\\
    \textrm{subject to} \quad & \forall i: \: \sum_\ell x_{\ell,i} = 1 \label{LP:select} 
    \\
    & \forall i,j,\ell: \:  \sum_{r=i+1}^{j-1} \sum_{u=1}^{l-1} x_{u,r} \geqslant x_{\ell,i} + x_{\ell,j} - 1 \label{LP:BST}
     \\
    & \forall s: \:\sum_{\ell} \sum_i F^s_i\cdot \ell \cdot x_{\ell,i} \leqslant C \label{LP:ub}
     \\
  & \forall i,\ell:\: x_{\ell, i}\in \{0,1\}   
\end{align}

Here, $x_{\ell,i}$ is that $x_{\ell,i}=1$ if and only if key $i$ is assigned to level $\ell$. Constraint \eqref{LP:select} ensures that every key is assigned to exactly one level. Constraint \eqref{LP:BST} ensures that the resulting level vector satisfies Definition~\ref{level_definition}. Namely, we require that if keys $i$ and $j$ both have level $\ell$, then they are separated by a key at a lower level. Last, constraint \eqref{LP:ub} together with the objective \eqref{LP:obj} guarantee that $C$ is the maximum tree cost over all scenarios. 

The MILP for HTs can be obtaining using a similar process as for BSTs, with small differences. Namely, the levels $\ell$ start at $0$, and inequality \eqref{LP:BST} is replaced by a linear inequality ensuring that the resulting level vector is \emph{valid}, meaning that a prefix-free codeword can be associated with each key. For this purpose, we impose for all $\ell$,
\[
\sum_{b=0}^{\ell} 2^{\ell-b} \cdot |\{i: L_i = \ell\}| \leqslant 2^{\ell}.
\]
This can be rewritten as the following linear constraint, which replaces \eqref{LP:BST} in the MILP.
\[
\sum_{b=0}^{\ell} 2^{{\ell}-b} \sum_i x_{b,i} \leqslant 2^{\ell}.
\] 

We let Gurobi Optimizer version 9.1.2 solve our MILPs on a standard Windows 11 laptop with an Intel(R) Core(TM) i5-8365U CPU processor running at 1.60GHz with 16Gb of on-chip RAM. The observed running times, in seconds, are shown in the following Table~\ref{tab:gurobi.runtime}.

\begin{table}[ht!]
	\begin{center}
        \begin{tabular}{r|c|c|} \cline {2-3}
            & Binary Search Tree & Huffman Tree \\
            \cline {2-3}
            cost        &  8.23 & 0.07 \\ \cline {2-3}
            competitive ratio & 14.47 & 0.12  \\ \cline {2-3}
            regret      & 13.78 & 0.04\\ \cline {2-3}
        \end{tabular}
	\end{center}
\caption{\textsc{Gurobi} running times in seconds.}
\label{tab:gurobi.runtime}
\end{table}

In contrast, {\sc R-bst} and {\sc R-ht} run in less than $0.1$ seconds, using a Python implementation. Table~\ref{tab:comparison} summarizes the results of the experiments. As expected, the empirical performance is better than the worst-case guaranties of Theorems~\ref{th:alg.bst.ratio} and~\ref{thm:huffman-approx}. This is due to the fact that real data do not typically reflect adversarial scenarios (compare, e.g., to the adversarial constructions of Theorem~\ref{prop.bst.lb} and Theorem~\ref{thm.ht.lb}).

\subsection{Pareto-Optimality and Fairness}
\label{subsec:fair.experiments}

\begin{figure}[th!]
    \centering
    \includegraphics[width=0.8\textwidth]{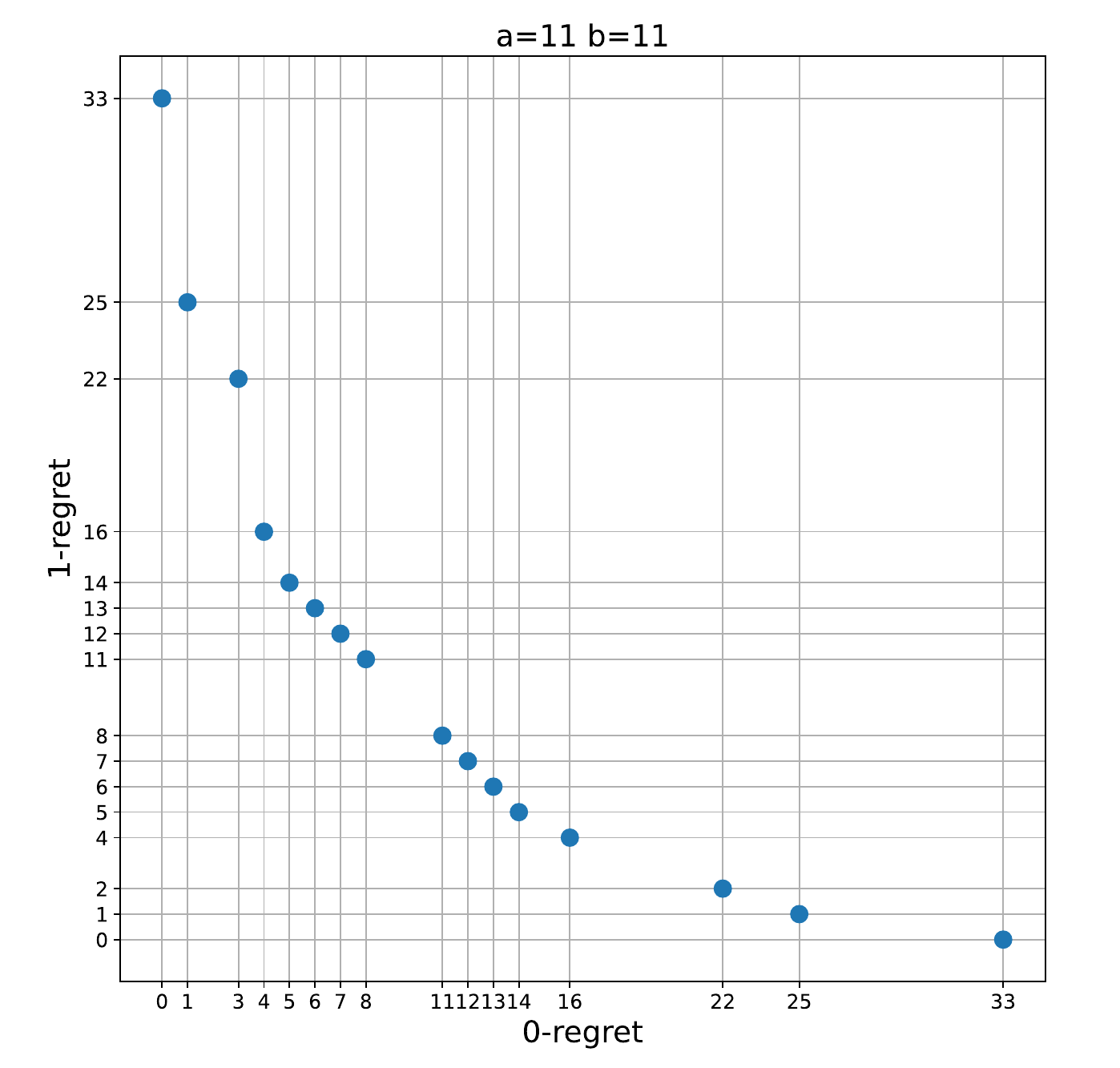}
    \caption{The Pareto front for strings with $a=11,b=11$.}
    \label{fig:g}
\end{figure}

We report experiments on our Pareto-optimal algorithm of Section~\ref{sec:regret.bst}, denoted by PO. The universe of all possible keys is represented by the names of cities from 10 different countries (the same counties as in Section~\ref{subsec:exp.first}). From this data, a string $s$ of length $2n$, for some chosen $n$, is generated by selecting two of the above countries (which we call country 0 and country 1) and the $n$ largest in population cities from each country. Then, $s$ is obtained by sorting lexicographically (i.e., alphabetically) the $2n$ cities and by setting $s_i \in \{0,1\}$, depending on whether the $i$-th city in $s$ belongs to country 0 or 1. 

We run algorithm PO and found the regret-based Pareto front for each string $s$ generated as above, choosing $n=30$. From the results, we observed that for every $s$, there exists a BST whose $0$-regret and $1$-regret are {\em both} bounded by $n$. Specifically, for every $s$, we were able to find a tree with a 0-regret of 27 and 1-regret of 28, as well as a tree with a 0-regret of 28 and a 1-regret of 27. These regret numbers were obtained for all $\binom{10}{2}$ country pairs. 

Furthermore, for all $a,b \in \{0,\ldots ,11\}$, we generated all strings $s$ of size $a+b$, i.e., $a$ keys from scenario \textbf{0} (country 0) and $b$ keys from scenario \textbf{1} (country 1). Figure~\ref{fig:g} illustrates our findings for the case $a=b=11$. Here, a point $(\alpha,\beta)$ signifies that we can find a BST of 0-regret $\alpha$ and 1-regret $\beta$ for {\em any} string  with 11 0s and 11 1s. This can be accomplished by running algorithm PO on all possible such strings. We obtained the same conclusion for all strings in which $a\leq 11$ and $b\leq 11$, as summarized  in the following observation.

\begin{remark}
The experimental results suggest that for a string $s$ consisting of $a$ 0s and $b$ 1s, there exists a BST of $0$-regret at most $a$ and $1$-regret at most $b$.
\label{rem:conjecture}
\end{remark}
\section{Conclusion}
\label{sec:conclusion}

We introduced the study of scenario-based robust optimization in data structures such as BSTs, and in data coding via Huffman trees. We gave hardness results, and theoretically optimal algorithms for a variety of measures such as competitive ratio, regret and Pareto-optimality. Our work also established connections between fairness and multi-objective regret minimization. Future work will address other data structures, such as B-trees and quad-trees, which play a predominent role  in data management. Our approaches address a fundamental issue: the tradeoff between cost and frequency of operations, which can be of use in many other practical domains, such as inventory management in a warehouse.

\printbibliography

\end{document}